\newtheorem{theorem}{Theorem}
\newtheorem{lemma}[theorem]{Lemma}
\newtheorem{proposition}[theorem]{Proposition}
\newtheorem{claim}[theorem]{Claim}
\newtheorem{example}[theorem]{Example}
\newtheorem{definition}[theorem]{Definition}
\newcommand{\namedref}[2]{\hyperref[#2]{#1~\ref*{#2}}}
\newcommand{\figurerefb}[2]{\hyperref[#1]{Figure~\ref*{#1}#2}}
\newcommand{\equationref}[1]{\hyperref[#1]{(\ref*{#1})}}
\renewcommand{\eqref}{\equationref}
\newcommand{\comment}[1]{}
\renewcommand{\D}{\mathcal{D}}
\newcommand{\DEBUG}[1]{}
\newcommand{\argmax}{\operatornamewithlimits{arg\,max}}
\newcommand{\trim}{ \mathtt{trim} }
\renewcommand{\setminus}{-}
\renewcommand{\ln}{\log}  
\newcommand{\vc}[1]{\mathbf{#1}}
\newcommand{\rev}{\textsc{Rev}}
\def\abs#1{\left|#1  \right|}
\begin{document}

\title{A Field Guide to Personalized Reserve Prices}

\author{
Renato Paes Leme
\and  Martin P\'{a}l
\and Sergei Vassilvitskii
}

\maketitle

\begin{abstract}
We study the question of setting and testing reserve prices in single item auctions when the bidders are not identical.  At a high level,  there are two generalizations of the standard second price auction: in the {\em lazy} version we first determine the winner, and then apply reserve prices; in the {\em eager} version we first discard the bidders not meeting their reserves, and then determine the winner among the rest. We show that the two versions have dramatically different properties: lazy reserves are easy to optimize, and A/B test in production, whereas eager reserves always lead to higher welfare, but their optimization is NP-complete, and naive A/B testing will lead to incorrect conclusions.  Despite their different characteristics, we show that the overall revenue for the two scenarios is always within a factor of 2 of each other, even in the presence of correlated bids.  Moreover, we prove that the eager auction dominates the lazy auction on revenue whenever the bidders are independent or symmetric. We complement our theoretical results with simulations on real world data that show that even suboptimally set eager reserve prices are preferred from a revenue standpoint. 
 \end{abstract}

\section{Introduction}


A key part of auctions is setting the minimum price at which the seller is
willing to part with the item.  The so called reservation, or, {\em reserve},
price is critical to maximizing revenue, as proven by Myerson in his Nobel
prize winning work ~\cite{Myerson81}. In the online advertising scenario, setting the reserve price
is a non-trivial exercise---the auctions are repeated, hence agents may be
adapting their behavior to influence the learning~\cite{AminRS13,MohriM14,CesaBianchiGM15}, only a
glimpse into the buyers' valuations is known~\cite{DhangwatnotaiRY15}, and there is a
heterogeneity in the sophistication level of the bidders. On the other hand, as
observed by Celis et al.~\cite{CelisLMN14}, many of the auctions have very few
bidders (the median number of bidders is six in their dataset), thereby
exacerbating the need for good reserve prices to maximize revenue.

The fact that reserve prices are a good idea comes from Myerson's
seminal work \cite{Myerson81} which shows that if the
valuations are drawn independently and identically (iid) from a distribution satisfying a certain regularity
assumption, then the optimal auction takes the form of a second price auction with a reserve price. 
Myerson's result generalizes past the iid setting:
if agents are independent but not identical, the optimal auction involves 
sorting the agents by a function of their bid (called the \emph{virtual value})
and assigning the item to the agent with largest non-negative virtual value.

In practice, the optimal auction is complicated to implement, since
it involves learning the distributions of valuations in a robust enough manner to
allow the computation of virtual values. And even if this computation were
feasible, in the spirit of Wilson doctrine \cite{wilson1985}, simple detail-free
mechanisms are often preferred in complex environments in order to mitigate the
risks introduced by assumptions of the model. In the spirit of designing
simpler auctions for the non-iid setting, Hartline and Roughgarden
\cite{HartlineR09} show that if players are independent and their distribution
obey the monotone hazard rate condition, then there exist a vector of
personalized reserve prices that generates revenue which is at least half of the
optimal revenue. Both assumptions (independence and monotone-hazard-rate) are
necessary for their result.

In this work, we revisit the topic of setting personalized reserve
prices in second price auctions from the perspective of a practitioner: How to
compute them? How to make sure they are computed correctly? How to apply them in an efficient manner?  
While asking this question, we avoid (in most of our results) making any
assumptions about the shape of the bid distributions.

{\bf Evaluation.} In practice computing a reserve price that behaves well in offline simulations 
is only the first part of the process. A successful experimental evaluation is 
key in proving that assumptions made in theory are reasonable, and do not 
lead to unintended consequences. 

Testing reserve prices is a separate research challenge in and of itself. We distinguish 
between short term studies, where the goal is to measure the immediate benefit 
of the reserve prices, and long term studies that intend to capture the strategic interactions between the bidders and  
auctioneer. Here we focus on the short term studies, and observe that testing 
personalized reserve prices, even in this, relatively simple, setting, is non-trivial, and 
can lead to incorrect conclusions. 

\subsection{Our Contributions}
We show that the problem of computing and applying personalized 
reserve prices is nuanced, and
has paradoxical behavior, particularly when testing the efficacy of
reserves in an A/B test. Our empirical evaluations show that the paradoxical
behavior is not limited to theory, but does occur in practice as well.

We begin by describing two different approaches to applying personalized reserve
prices: {\em lazy} and {\em eager}, and show that while sometimes one approach
dominates another on revenue, they are always within a factor of two of each
other (Theorem \ref{thm:lazy_eager_bound}).

We then identify two mildly restrictive settings (those of symmetric bidders, and independent bidders) 
and show that in these situations optimal eager reserve prices always yield more revenue (Theorems \ref{claim:eager_better_symmetry} and \ref{theorem:eager_better_independence}).

Turning to computational issues in setting the optimal lazy and eager
reserve prices, we show that a simple nearly linear time algorithm can compute
the optimal lazy reserves from previous history, while computing the optimal eager
reserves is NP-hard (Propositions \ref{prop:linear_lazy} and \ref{prop:nphard}). 

We show that naive A/B testing of eager reserve
prices  {\em always} leads to a drop in revenue, even when the reserve
prices are set correctly! (Theorem \ref{thm:revenue-testing-eager}) We observe
that lazy reserves do not suffer from this problem, and behave in an intuitive
manner.

Finally, we present an empirical evaluation of our findings on
real world data, and show that  the performance of the algorithms is much better
than the large approximation factors guaranteed by the theory (Section \ref{sec:empirical}). 

\section{Preliminaries}
We consider the standard setting of single item auctions.  Let $A =
\{1,2,\ldots, n\}$ be the set of agents interested in the item. Each agent $i$
has a value $v_i$ for the item, and submits a bid $b_i$ to the auctioneer. Given
a vector of bids ${\bf{b}} = (b_1, b_2, \ldots, b_n)$, we denote by $b^{(1)}$ the highest bid, and $b^{(2)}$ the second highest bid. 
We will denote 
by ${\bf r} = (r_1, r_2, \ldots, r_n)$ the vector of personalized reserve prices.

We assume that the valuation of each agent is drawn independently from an
unknown distribution with CDF $F$ and PDF $f$.  To ease the exposition we assume
that the distributions $F$ are regular, in that the virtual value function
$\phi(v) = v - \frac{1 - F(v)}{f(v)}$ is monotone non-decreasing. 

In the case when all of the distributions are identical and known to the seller,
$F_1 = F_2 = \ldots = F_n$, Myerson proved the following characterization of the
optimal auction: \begin{enumerate}\itemsep=0in \item Collect bids $b_1, b_2,
      \ldots, b_n$.  \item Discard all bids that are below $\phi^{-1}(0)$.
  \item Allocate the item to the agent with the highest bid, and charge her the
maximum of $\phi^{-1}(0)$ and $b_{(2)}$.  \end{enumerate}

Note that $\phi^{-1}(0)$ acts as a reserve price for the auction: this is the
minimum bid any agent must submit to win, and also acts as a lower bound on the
revenue to the seller.  

In this work our focus is on the non-identical setting. In this case Myerson
proved that the optimal auction is: \begin{enumerate}\itemsep=0in \item Collect
    bids $b_1, b_2, \ldots, b_n$.  \item For each agent, discard bid $b_i$ if it
      is lower than $\phi^{-1}_i(0)$.  \item Allocate the item to the agent with
        the highest virtual value, $\phi_i(b_i)$, and charge her the maximum of
        $\phi_i^{-1}(0)$ and $\phi_i^{-1}(\phi_j(b_j))$, where $j$ is the agent
        with the second highest virtual value.  \end{enumerate}

There are two major differences from the identical setting. First, instead of
having a universal reserve price $\phi^{-1}(0)$, each agent now has a
personalized reserve price  $\phi_i^{-1}(0)$.  Second, the winner is determined
as the agent whose bid has the highest virtual value, not the one who has the highest bid. 

While the first auction is easily implemented in practice, the latter is much
more problematic. First, the virtual value functions depend critically on the
value distributions, which themselves are not always known and must be
estimated. Second, the auction is counterintuitive to outsiders, as the agent
with the highest bid does not always win the item. 

\subsection{Personalized Reserve Prices} To combat the potential bid inversion
that comes with ordering by virtual values,  Hartline and Roughgarden
\cite{HartlineR09} proposed keeping the personalized reserve prices aspect
of the optimal auction, but ordering items by bid instead. They show that with 
monopoly reserves $\phi_i^{-1}(0)$, this auction yields a $2$-approximation to the revenue
of the optimal auction when agents are independent and follow the monotone hazard
rate condition.

This natural approach leads to two possible flavors of second price auctions,
which were first introduced by Dhangwatnotai et al.~\cite{DhangwatnotaiRY15}.
Informally, in the {\em eager} regime, we first discard all of the bids that do
not meet their personalized reserve prices and then run the second price auction
on the remaining bids. In the {\em lazy} regime, we always try to allocate the
item to the agent with highest bid. If her bid is below the reserve price the
good goes unallocated, otherwise the agent is charged the maximum between her
reserve price and the second highest bid. We describe these two auctions more
formally in Section \ref{sec:lazyeagerauctions}.

\subsection{Estimating Value Distributions} Myerson's theoretical analysis
requires us to know the cumulative density function $F$, as well as the PDF,
$f$, to compute the virtual value function $\phi(\cdot)$. In practice this is a
tall order. In the online advertising context, the auctions are repeated, and
thus we can observe a number of draws from the distribution. For example, we
observe the sequence of bids for agent $i$: $b^1_i, b^2_i, b^3_i, \ldots$, from
which we can compute empirical estimates  $\hat{F}$ and $\hat{f}$ for  $F$ and
$f$.

In this work we focus on the computational complexity of computing optimal
personalized reserve prices from the previous bids. In particular we show that
the optimization question for eager and lazy auctions has very different
profiles, one being solvable in polynomial time, and the other being
NP-complete.

\section{Lazy and Eager Auctions}\label{sec:lazyeagerauctions}
Insisting that advertisers are ranked by bid leads to two flavors of second price auctions. 
As before, let ${\bf b} = (b_1, b_2, \ldots, b_n)$ be the bids submitted to the auctioneer, and assume
without loss of generality that $b_1 \geq b_2 \geq b_3 \ldots b_n$.  Let ${\bf r} =  (r_1, r_2, \ldots, r_n)$ be 
the vector of reserve prices, with reserve price $r_i$ applying to bidder $i$. 

Following the work of Dhangwatnotai, Roughgarden and Yan \cite{DhangwatnotaiRY15}, we define second 
price auctions with {\em lazy} and {\em eager} reserves. 

{\bf Lazy Reserves}:
\begin{itemize} \itemsep=0in
\item If $b_1 < r_1$ then there is no winner and the item goes unsold. 
\item if $b_1 \geq r_1$, allocate the item to bidder $1$ and charge her $\max(r_1, b_2)$. 
\end{itemize}

{\bf Eager Reserves}:
\begin{itemize}\itemsep=0in
\item Let $S = \{i : b_i \geq r_i\}$ be the set of bidders who bid above reserve. 
\item Let $j$ be bidder with the highest bid in $S$ (ties broken by the original ordering). 
\item Allocate the item to bidder $j$ and charge her the maximum of her reserve and second highest bid 
in $S$: $\max(r_j, \max_{i \in S \setminus \{j\}} b_i)$.
\end{itemize}

Observe that when the reserve prices are identical,  $r_1 = r_2 = \ldots = r_n$, then both of these
approaches implement the standard second price auction.  We denote by $\rev_L(\vc{b}; \vc{r})$ and $\rev_E(\vc{b}; \vc{r})$ the 
revenue obtained by running the lazy and eager auctions on the same set of bids and reserve prices. 

Both versions of the auction are incentive compatible and individually rational,
since, whenever player $i$ wins with bid $b_i$, she still wins with all bids
$b'_i \geq b_i$. Moreover, in either case, the payment of $i$ correspond to her
critical bid, i.e., the infimum of the bids for which he wins.

\subsection{Example}
To demonstrate the difference between the two approaches and show why in general the revenues 
are incomparable, consider the following two examples.

\begin{example}[Eager Dominates Lazy]
Consider an auction with three bidders $A$, $B$, and $C$, who bid $7$, $5$, and $3$ respectively. Suppose the vector
of reserve prices is $8$, $1$ and $2$. The auction with lazy reserves tries to allocate the item to $A$ since she 
has the highest bid. However, since her bid is lower than her reserve price, the item goes unallocated, and the seller collects no revenue. 
The auction with eager reserves first filters out $A$, allocates  the item to $B$ and collects revenue of $3$. 
\end{example}

\begin{example}[Lazy Dominates Eager]
As above let $A$, $B$, and $C$ be the three bidders bidding $7$, $5$, and $3$. Suppose the vector of reserve prices is
$2$, $6$ and $1$. The auction with lazy reserves allocates the item to $A$ since her bid is above her reserve and charges her 
$5$ (the second highest bid). On the other hand, the auction with eager reserves first removes $B$ from consideration (since his bid 
is below reserve). $A$ still wins the item, but is charged only $3$.
\end{example}

Intuitively, eager reserve prices lead to higher revenue when the highest bid is priced out; whereas lazy reserve prices lead to higher revenue
when the second highest bidder is priced out.

\section{Comparing Lazy and Eager Reserve Prices}

The second price auctions with eager and lazy reserves are  identical
when all agents are subject to the same reserve price. If the reserve prices are
personalized, however, the outcomes of the auctions can be very different.
In terms of welfare, it is easy to see that the
auction with eager reserves always dominates the auction with lazy reserves.
The auction with lazy reserves allocates the item only if the highest
bidder is above her reserve price. In such cases, the auction with eager reserves
also allocates the item. This fact alone makes a second price auction with eager
reserves more attractive to sellers that care about match rate.

As we saw in examples in Section \ref{sec:lazyeagerauctions}, in general the revenue 
gains due to lazy and eager reserve prices are incomparable. In this section we provide a 
tighter characterization, and identify broad classes where one mechanism dominates another. 

We begin with Theorem \ref{thm:lazy_eager_bound} and prove that for any (possibly correlated)
bid distribution no auction generates more than twice the revenue of the other, 
and give two examples to show this bound is asymptotically tight.

Then, in Section \ref{sec:restricted} we show that the revenue of the optimal
eager mechanism dominates that of the optimal lazy mechanism whenever either
(i) the bidders are symmetric (the joint bid distribution doesn't change when 
bidders are permuted), or, (ii) the bidders' bids are drawn from independent
(not necessarily identical) distributions.

\begin{theorem}\label{thm:lazy_eager_bound}
Let $\mathcal{D}$ be any distribution over bid vectors $\vc{b}$.
Also, let $\rev_L(\vc{b}; \vc{r})$ and $\rev_E(\vc{b}; \vc{r})$ denote the
revenue of the lazy and eager auctions under bid vectors $\vc{b}$ and
personalized reserves $\vc{r}$. Let:
$$\vc{r}_L^* = \argmax_{\vc{r}} \E_{\vc{b} \sim \mathcal{D}}[ \rev_L(\vc{b},
\vc{r})],$$
$$\vc{r}_E^* = \argmax_{\vc{r}} \E_{\vc{b} \sim \mathcal{D}}[ \rev_L(\vc{b},
\vc{r})].$$ Then
$$\E_{\vc{b} \sim \mathcal{D}}[ \rev_L(\vc{b}, \vc{r}^*_L)] \leq 2 \cdot \E_{\vc{b}
\sim \mathcal{D}}[ \rev_E(\vc{b}, \vc{r}^*_E)] ,$$
$$\E_{\vc{b} \sim \mathcal{D}}[ \rev_E(\vc{b}, \vc{r}^*_E)] \leq 2 \cdot \E_{\vc{b}
\sim \mathcal{D}}[ \rev_L(\vc{b}, \vc{r}^*_L)] .$$
\end{theorem}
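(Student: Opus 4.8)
The plan is to derive both inequalities from a single pointwise estimate comparing the two formats on \emph{the same} reserve vector. Fix a bid vector $\vc b$ (write $b^{(1)}\ge b^{(2)}$ for the top two bids) and an arbitrary reserve vector $\vc r$; the claim is
\begin{equation}\label{eq:pw-le}
\bigl|\rev_L(\vc b;\vc r)-\rev_E(\vc b;\vc r)\bigr|\;\le\; b^{(2)}.
\end{equation}
Granting \eqref{eq:pw-le}, the theorem follows quickly (all expectations over $\vc b\sim\mathcal D$): since bids are nonnegative, the all-zeros reserve vector turns both auctions into the plain second-price auction, so $\E[\rev_E(\vc b;\vc 0)]=\E[\rev_L(\vc b;\vc 0)]=\E[b^{(2)}]$, and hence $\E[b^{(2)}]\le\E[\rev_E(\vc b;\vc r_E^*)]$ and $\E[b^{(2)}]\le\E[\rev_L(\vc b;\vc r_L^*)]$ by optimality of $\vc r_E^*$, $\vc r_L^*$ (the maximizers of the eager and lazy objectives). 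Taking expectations in \eqref{eq:pw-le} and then using optimality of $\vc r_E^*$,
$$\E[\rev_L(\vc b;\vc r_L^*)]\le\E[\rev_E(\vc b;\vc r_L^*)]+\E[b^{(2)}]\le\E[\rev_E(\vc b;\vc r_E^*)]+\E[\rev_E(\vc b;\vc r_E^*)],$$
and the symmetric computation (swapping the roles of lazy and eager) gives the reverse inequality.

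To prove \eqref{eq:pw-le} I would track who wins in each format. Let $w$ be the highest bidder (ties broken by the fixed ordering used in both auctions) and $S=\{i:b_i\ge r_i\}$. \emph{Direction $\rev_L\le\rev_E+b^{(2)}$:} if $b_w<r_w$ then $\rev_L=0$ and there is nothing to prove; otherwise $w\in S$ and, having the top bid, $w$ is also the eager winner, so the eager payment is at least $r_w$, while the lazy payment is exactly $\max(r_w,b^{(2)})\le\max(\rev_E(\vc b;\vc r),b^{(2)})\le\rev_E(\vc b;\vc r)+b^{(2)}$. \emph{Direction $\rev_E\le\rev_L+b^{(2)}$:} assume $S\ne\emptyset$ and let $j$ be the eager winner. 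If $j=w$, then $j$ is the lazy winner too, the lazy payment is $\max(r_j,b^{(2)})$, and the eager payment $\max(r_j,\text{second-highest bid in }S)$ is at most that, since an order statistic of $S$ never exceeds the corresponding order statistic of $\vc b$. If $j\ne w$, some bidder outbids $j$, so $b_j\le b^{(2)}$, and since $j\in S$ we have $r_j\le b_j\le b^{(2)}$; the eager payment is the max of $r_j$ and a bid of $S\setminus\{j\}$, hence at most $b^{(2)}$.

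The only subtle point — and the one where the paper's ``paradox'' lives — is the case $j\ne w$ of the second direction: this is exactly the ``Eager Dominates Lazy'' configuration, where the lazy auction collects nothing because the overall top bidder is priced out. What rescues the bound is the elementary observation that once the top bidder is out of contention, the eager winner's own bid is already capped by $b^{(2)}$, so eager cannot collect more than $b^{(2)}$ there. A minor bookkeeping nuisance is tie-breaking at the top bid: two bidders can tie for $b^{(1)}$ with one in $S$ and one not, so that $j\ne w$ even though $b_j=b^{(1)}$; but then $b^{(1)}=b^{(2)}$ and the same cap applies. Finally I would note that \eqref{eq:pw-le} is tight at such tie configurations, so by placing the distribution's mass near them one obtains the two promised examples showing the constant $2$ cannot be improved.
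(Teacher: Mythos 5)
Your proof is correct and follows essentially the same route as the paper: both arguments evaluate the two formats on the \emph{same} reserve vector, observe that the shortfall of one format relative to the other is covered by the no-reserve second-price revenue $b^{(2)}=\rev(\vc b;\vc 0)$, and then bound both resulting terms by the optimal revenue of the target format. Your packaging as a single symmetric pointwise estimate $|\rev_L(\vc b;\vc r)-\rev_E(\vc b;\vc r)|\le b^{(2)}$ (with explicit tie-breaking) is a slightly cleaner statement of the same case analysis the paper carries out separately for each inequality.
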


\begin{proof}
For the first inequality, consider running the eager second price auction with
$\vc{r}^*_L$. For any bid vector $\vc{b}$, if the highest player is below
the reserve then: $\rev_L(\vc{b}, \vc{r}) = 0 \leq \rev_E(\vc{b}, \vc{r})$. If
the highest player is above the reserve and her payment in the lazy auction is
her reserve, then both auctions generate the same revenue. The final case is the
case where the highest bidder is above the reserve and the payment in the lazy
auction is the second highest bid. In this case, the revenue of the eager
auction might be lower if the second highest bidder is below her reserve. For
that bid vector, however, the revenue of the lazy auction is the second highest
bid, which is equal to the revenue of the auction with no reserve prices.
Therefore:
$$\begin{aligned}
\rev_L(\vc{b}, \vc{r}^*_L) & \leq \max[ \rev_E(\vc{b}, \vc{r}_L^*),
\rev_E(\vc{b}, \vc{0})] \\ & \leq \rev_E(\vc{b}, \vc{r}_L^*) + \rev_E(\vc{b},
\vc{0}) \end{aligned}$$
Taking expectations over $\vc{b}$ we obtain the first inequality:
$$\begin{aligned}
  \E[\rev_L(\vc{b}, \vc{r}^*_L)] & \leq \E[\rev_E(\vc{b}, \vc{r}_L^*) +
\rev_E(\vc{b},    \vc{0})] \\ & \leq 2 \cdot \E[ \rev_E(\vc{b}, \vc{r}_E^*)]
\end{aligned}$$

For the second inequality, consider running a lazy second price auction with
$\vc{r}_E^*$.  If for any bid vector $\vc{b}$, the highest player is above
her reserve, the lazy auction is guaranteed to generate more revenue than the
eager auction. If not, then the eager auction generates at most revenue equal
to the bid of the second highest bidder (since revenue is dominated by welfare),
which is the revenue of the second price auction with no reserves. Therefore:
$$\begin{aligned}
\rev_E(\vc{b}, \vc{r}^*_E) &\leq \max[ \rev_L(\vc{b}, \vc{r}_E^*),
\rev_L(\vc{b}, \vc{0})] \\ & \leq \rev_L(\vc{b}, \vc{r}_E^*) + \rev_L(\vc{b},
\vc{0}) 
\end{aligned}$$
We obtain the second inequality by again taking expectations over $\vc{b}$:
$$\begin{aligned}
  \E[\rev_E(\vc{b}, \vc{r}^*_E)] &\leq \E[\rev_L(\vc{b}, \vc{r}_E^*) + \rev_L(\vc{b},
\vc{0})] \\ & \leq 2 \cdot \E[\rev_L(\vc{b}, \vc{r}_L^*)]
\end{aligned}$$
\end{proof}


\subsection{Lower Bound Examples}

The following examples complement  the bounds in Theorem 
\ref{thm:lazy_eager_bound}. First, we show an example where revenue
from the eager mechanism is almost twice the revenue from the best lazy 
mechanism.

\begin{example}\label{claim:eager_better}
There is an instance with $n$ bidders with valuations drawn independently from 
the same distribution, where the best eager mechanism generates $2-o(1)$ 
times more revenue than the best lazy mechanism.
\end{example}

\begin{proof}

Consider $n$ identical bidders, with each bidder choosing to bid $b_i=n$ with
probability $1/n^2$ and with probability $1-1/n^2$ bidding $b_i=1$.
(Each bid is drawn independently. To break ties, each bid
is perturbed by adding a noise term drawn independently from $[0,\epsilon]$
for infinitesimally small $\epsilon)$).

With probability $1-1/n+O(n^{-2})$, all bidder valuations will be 
$1$, and with probability $1/n - O(n^{-2})$, some bidder will have a valuation of $n$. 
Thus, the optimal welfare given these valuations is $2-O(n^{-2})$.

We claim that a lazy pricing mechanism extracts at most $1$ revenue. By symmetry
each buyer is equally likely to be the winner. Now consider the reserve price for an 
individual bidder $i$. 

\begin{itemize}
\item If $r_i<1+\epsilon$, revenue will be bounded by $1+\epsilon$ unless both
bidder $i$ and another bid bid high (probability of $O(n^{-2})$), in which case revenue is bounded by $n$. 
The expected revenue is then $1 + o(1)$.

\item If $r_i>1+\epsilon$, revenue is only earned if $b_i = n$, and is bounded by $n$. Since the probability of a high 
bid is less than $n^{-1}$, the expected revenue is bounded by $1$. 
\end{itemize}

On the other hand, we claim that an eager mechanism can earn $2-o(1)$ revenue.
This can be done by imposing a high reserve price $r_i = n$ for 
$i=2,\dots,n$ (all buyers except one) and a low reserve price $r_1 = 1$ on the 
remaining bidder. With this setup:

\begin{itemize}
\item With probability $1/n-o(n^{-2})$, one of bidders $2,\dots,n$ will submit a 
high bid, generating revenue $n-O(1)$.
\item With the remaining probability, the auction is guaranteed to clear 
because $b_1 \ge r_1 = 1$ and will generate at least $1$ in revenue.
\end{itemize}
The overall expected revenue is thus 
$\frac{1}{n}\cdot (n-o(1)) + (1-\frac{1}{n})\cdot 1 = 2-o(1)$.
\end{proof}

Next, we show an example where the lazy mechanism generates twice the revenue 
of the best eager mechanism. 
This example requires both that bidders are not symmetric,
and that their bids are correlated.

\begin{example}\label{claim:lazy_better}
There is an instance with two correlated heterogenous bidders where a lazy
mechanism generates $2-o(1)$ times higher revenue than the best eager mechanism.
\end{example}

\begin{proof}
Let $M$ be a sufficiently large constant, and consider the following joint distribution of bids of two bidders.
\begin{itemize}
  \item With probability $\frac{\ln M}{M}$, 
    bidder $1$ bids $0$ while bidder $2$ bids $b_2=M$.
  \item With probability $1-\frac{\ln M}{M}$, first bidder's bid $b_1$ is 
    drawn from a truncated equal revenue distribution $F(b) = 1-\frac{1}{b}$ for
    $b \in [1,M)$ and $F(M) = 1$ and the second 
    bidder's bid is set to 
    $b_2 = (1-\epsilon) b_1$ for some arbitrarily small $\epsilon$.
\end{itemize}

Lazy reserve prices $r_1 = 0$, $r_2 = M$ extract the full
surplus as $\epsilon \rightarrow 0$, achieving expected revenue of 
\begin{align*}
&\frac{\ln M}{M} \cdot M + \left(1 - \frac{\ln M}{M}\right) (1-\epsilon) \left( \int_1^M b 
 f(b) db + \frac{M}{M} \right) \\
&=\ln M + \left(1 - \frac{\ln M}{M}\right) (1-\epsilon) \cdot \left(\ln M + 1\right) \\ 
&= (2 - o(1)) \ln M.
\end{align*}

Suppose the eager mechanism imposes reserves of $r_1$ and $r_2$. With probability $\ln M/M$ the revenue is $r_2$. For the remaining case, we condition on whether the second bidder bid above reserve. In the case $b_2 < r_2$,  bidder 1 is competing against his reserve price, $r_1$, generating revenue of at most:
$$r_1 (1 - F(r_1)) \geq r_1 \left[1 - \left(1 - \frac{1}{r_1} \right) \right]  = 1.$$

If, on the other hand, when $b_2 \geq r_2$, the total revenue is bounded by:
$$\int_{r_2}^{M} b f(b) db = \ln M - \ln r_2 = \ln (M/r_2)$$
Putting these together the overall revenue is at most :
\begin{align*}
&r_2 \frac{\ln M}{M} + \left(1 - \frac{\ln M}{M}\right) \cdot \left(1 + \ln
\frac{M}{r_2} \right) \\
&\leq 1 + r_2 \frac{\ln M}{M} +  \ln M - \log r_2
\end{align*}
This function is convex in $r_2$, thus achieving its maximum at the endpoints of the interval. It obtains its maximum value of $(1+ o(1)) \ln M $ at $r_2 = 1$. 
\end{proof}

\subsection{Restricted Settings}
\label{sec:restricted}
Above we proved that while the revenue from lazy and eager auctions is always within 
a factor of two of each other, an unconditional bound is impossible in  general. 
Here we consider two restricted settings, first of symmetric bidders, and then of independent
bidders. In both cases we show that eager auctions dominate lazy auctions.\\

{\bf Symmetric Bidders.} We say that bidders are symmetric (sometimes also
called exchangeable)
if  the bid distribution is
invariant under permutations. Formally, for every permutation $\pi:[n] \rightarrow [n]$
let $\vc{b}^\pi$ be the vector $(b_{\pi(1)}, b_{\pi(2)}, \hdots, b_{\pi(n)})$.
A distribution is symmetric if the distribution of $\vc{b}^\pi$ is the same
as the distribution of $\vc{b}$. Notice that iid implies symmetry, but
symmetry is more general---for example, consider the bid distribution obtained by
choosing in each time one buyer at random and letting her bid $H$ and
letting every other buyer bid $L$. The distribution is clearly symmetric,
but it is not independent (since there is always exactly one buyer bidding $H$).

\begin{theorem}\label{claim:eager_better_symmetry}
  If bidders are symmetric the optimal eager mechanism yields at least as much
revenue as the optimal lazy mechanism.
\end{theorem}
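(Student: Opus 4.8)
The plan is to prove the stronger statement that for symmetric bidders the optimal lazy mechanism is already achieved by an \emph{anonymous} reserve, i.e.\ by some $\vc{r}=(r,\dots,r)$. Once this is established the theorem follows immediately: with an anonymous reserve the eager and lazy auctions are literally the same auction (Section~\ref{sec:lazyeagerauctions}), so the optimal eager mechanism earns at least as much as the best anonymous-reserve eager auction, which equals the best anonymous-reserve lazy auction, which by the reduction equals the optimal lazy mechanism.

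The first ingredient is a structural observation about lazy auctions: for any bid vector $\vc{b}$ with a unique highest bid, $\rev_L(\vc{b},\vc{r})$ depends on $\vc{r}$ only through the single coordinate $r_{i^*}$ assigned to the top bidder $i^*$; indeed $\rev_L(\vc{b},\vc{r}) = \mathbf{1}[r_{i^*}\le b^{(1)}]\cdot\max(r_{i^*},b^{(2)})$, which is exactly $\rev_L(\vc{b},(r_{i^*},\dots,r_{i^*}))$. The second ingredient is the standard symmetrization trick: fix any reserve vector $\vc{r}$ and let $\pi$ be a uniformly random permutation. Since the lazy allocation/payment rule is invariant under consistently relabeling bidders and $\D$ is permutation-invariant, $\E_{\vc{b}}[\rev_L(\vc{b},\vc{r}^\pi)] = \E_{\vc{b}}[\rev_L(\vc{b},\vc{r})]$ for every $\pi$, hence $\E_{\vc{b}}[\rev_L(\vc{b},\vc{r})] = \E_\pi\E_{\vc{b}}[\rev_L(\vc{b},\vc{r}^\pi)]$.

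Now I would combine the two. Swapping the two expectations and applying the structural observation, for each fixed (generic) $\vc{b}$ the reserve that lands on the top bidder under a random $\pi$ is uniform over the multiset $\{r_1,\dots,r_n\}$, so $\E_\pi[\rev_L(\vc{b},\vc{r}^\pi)] = \frac{1}{n}\sum_{k=1}^n \rev_L(\vc{b},(r_k,\dots,r_k))$. Taking $\E_{\vc{b}}$ gives $\E_{\vc{b}}[\rev_L(\vc{b},\vc{r})] = \frac{1}{n}\sum_{k=1}^n \E_{\vc{b}}[\rev_L(\vc{b},(r_k,\dots,r_k))] \le \max_k \E_{\vc{b}}[\rev_L(\vc{b},(r_k,\dots,r_k))]$, so some anonymous reserve does at least as well as $\vc{r}$. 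Taking the supremum over $\vc{r}$ shows the optimal lazy revenue is attained by an anonymous reserve; then the two-line chain from the first paragraph closes the proof, using that $\rev_E(\vc{b},(r,\dots,r)) = \rev_L(\vc{b},(r,\dots,r))$ pointwise and that the optimal eager revenue is at least the eager revenue of any particular reserve vector.

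I expect the structural observation --- that on a fixed bid vector the lazy revenue is a function of the top bidder's reserve alone --- to be the key step; it is what makes the permutation average collapse to a convex combination of anonymous-reserve auctions, and it is precisely the property that \emph{fails} for eager auctions (where the revenue on a fixed bid vector can depend on several reserves simultaneously), consistent with the fact that lazy does not dominate eager in general. The only technical nuisance is ties in the top bid, which have probability zero under a regular (atomless) $\D$ and can be ignored.
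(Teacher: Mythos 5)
Your proof is correct and follows essentially the same route as the paper's: both arguments reduce to showing that for a symmetric bid distribution the optimal lazy revenue is achieved by an anonymous reserve, and then use the fact that with a single common reserve the lazy and eager auctions coincide. The only difference is in how anonymity of the optimum is justified --- the paper appeals to the per-bidder decoupling of Proposition~\ref{prop:linear_lazy}, while you derive it from a self-contained permutation-averaging argument built on the same structural fact that the lazy revenue on a fixed bid vector depends only on the top bidder's reserve.
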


\begin{proof}
Proposition \ref{prop:linear_lazy} tells us that the optimal lazy reserve price
for bidder $i$ is a function of the joint distribution of winning bids and prices
conditioned on bidder $i$ winning. If bidders are symmetric, the optimal
lazy reserve prices are the same for all bidders; $r_1 = r_2 = \dots = r_n$.
If all reserve prices are the same, lazy and eager mechanisms behave identically.
(It is however possible that a different reserve price vector yields higher 
revenue for the eager mechanism, as in Example \ref{claim:eager_better}.)
\end{proof}

{\bf Independent Bidders.} We show that if each bidder draws her value independently,
 auctions with eager reserves are always at least as good auctions with lazy reserves. 

\begin{theorem}\label{theorem:eager_better_independence}
If bidder valuations are drawn independently 
(not necessarily from identical distributions),
the optimal eager mechanism yields at least as much
revenue as the optimal lazy mechanism.
\end{theorem}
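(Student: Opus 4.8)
The plan is to prove something slightly stronger, which also sidesteps having to compute the optimal lazy reserves: the eager auction run with the \emph{monopoly reserve} vector $\vc{m}$, whose $i$-th coordinate is $m_i=\phi_i^{-1}(0)$, already generates at least as much expected revenue as \emph{every} lazy auction, under \emph{every} reserve vector $\vc{r}$. Since the optimal eager auction does at least as well as eager-with-$\vc{m}$, and the optimal lazy auction is one of the lazy auctions we dominate, the theorem follows.

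The single tool I would invoke is Myerson's virtual-value formula for revenue. As recorded in \sectionref{sec:lazyeagerauctions}, both auctions are incentive compatible, have monotone allocation rules, and charge the winner exactly her critical bid; since the $F_i$ are independent, the payment identity \cite{Myerson81} then gives that, for either auction, expected revenue equals the expected virtual value of the winner, $\E_{\vc{b}\sim\mathcal{D}}\big[\phi_{w}(b_{w})\big]$, where $w$ is the winning bidder (depending on the auction and its reserves) and this quantity is read as $0$ when the item goes unsold. It therefore suffices to show that for every fixed bid vector $\vc{b}$ and every reserve vector $\vc{r}$, the virtual value of the winner of eager-with-$\vc{m}$ is at least that of the winner of lazy-with-$\vc{r}$.

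This reduces to a short case analysis; let $t$ be the overall highest bidder. If lazy-with-$\vc{r}$ leaves the item unsold, its contribution is $0$, while the winner of eager-with-$\vc{m}$, if one exists, cleared her monopoly reserve and hence --- by regularity of $\phi$ --- has nonnegative virtual value. If lazy-with-$\vc{r}$ sells (so $t$ wins) and $b_t\ge m_t$, then $t$ also clears her monopoly reserve, and being the globally highest bidder she is also the highest bidder among those clearing, so $t$ wins eager-with-$\vc{m}$ as well and the two virtual values are literally equal. Finally, if lazy-with-$\vc{r}$ sells but $r_t\le b_t<m_t$, then $\phi_t(b_t)<0$ by regularity, whereas the eager-with-$\vc{m}$ winner (if any) again has nonnegative virtual value, and if no bidder clears its contribution is $0>\phi_t(b_t)$. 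In every case the eager-with-$\vc{m}$ contribution dominates; summing over $\vc{b}\sim\mathcal{D}$ gives $\E[\rev_E(\vc{b};\vc{m})]\ge\E[\rev_L(\vc{b};\vc{r})]$.

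I expect the point to watch is not a calculation but the role of the hypothesis: ``revenue $=$ virtual welfare'' holds only for independent bidders and fails dramatically under correlation --- indeed \exampleref{claim:lazy_better} is precisely a lazy auction whose revenue far exceeds the virtual welfare of its allocation, which is why no such unconditional comparison can hold there. The only structural claim that needs checking is that whenever the lazy auction allocates to a bidder with nonnegative virtual value, eager-with-$\vc{m}$ allocates to the \emph{same} bidder; this is immediate because that bidder simultaneously has the globally highest bid and meets her own monopoly reserve, so she lies in $S$ and is its highest member.
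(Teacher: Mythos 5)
Your argument is correct under the paper's standing assumptions, but it is a genuinely different route from the one the paper takes. You reduce everything to Myerson's identity ``expected revenue $=$ expected virtual value of the winner'' (valid here because both auctions are dominant-strategy IC with critical-bid payments and the bids are independent) and then make a pointwise comparison showing that the eager auction with monopoly reserves $\vc{m}$, $m_i=\phi_i^{-1}(0)$, always allocates to a bidder of weakly larger virtual value than any lazy auction does; this yields the stronger statement $\E[\rev_E(\vc{b};\vc{m})]\ge\E[\rev_L(\vc{b};\vc{r})]$ for \emph{every} $\vc{r}$, without ever touching the optimal lazy reserves. The price you pay is that every step of the case analysis (``cleared the monopoly reserve $\Rightarrow$ nonnegative virtual value'', ``$b_t<m_t\Rightarrow\phi_t(b_t)\le 0$'') leans on regularity of each $F_i$ and on virtual values being well defined, so your proof only covers regular (density-admitting) distributions --- which the paper assumes ``to ease the exposition,'' but which its own proof never uses. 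The paper instead proves \lemmaref{lemma:trimlemma}: starting from any lazy reserve vector it iteratively ``trims'' each bid distribution at its reserve while weakly increasing lazy revenue, producing a distribution on which lazy and eager coincide, and then uses stochastic dominance to pass back to the original distribution. That argument is virtual-value-free, hence applies to arbitrary independent distributions (irregular, atomic, etc.), and it produces an explicit eager reserve vector derived from the lazy one; your argument is shorter and isolates exactly where independence enters (revenue equals virtual welfare, which indeed fails in the correlated \exampleref{claim:lazy_better}), and it pinpoints monopoly reserves as a single eager vector that dominates all lazy auctions --- but only in the regular regime.
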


To prove the theorem we will exhibit a method that, given any vector $\vc{r}$ of
lazy reserve prices, produces a vector $\vc{r}_E$ such that $\E[\rev_E(\vc{b};
\vc{r}_E)] \geq \E[\rev_L(\vc{b};\vc{r})]$. The construction will rely heavily 
on the independence of the bids. A key concept in the proof is that of
trimmed distributions:

\begin{definition}[trim]
Consider a non-negative random variable $X$ distributed according to 
some distribution $\D$, and a real number $r\ge 0$. 
We use $\trim(\D)$ to denote the distribution of the 
random variable $X'=X \cdot \vc{1} \{X\ge r\}$, i.e.. random variable that is equal to 
$X$ if $X\ge R$ and is zero if $X<r$. 
\end{definition}

This definition allows us to define the main lemma. In the following proofs, we
will abbreviate $\E_{\vc{b} \sim \D}[\rev(\vc{b}, \vc{r})]$ by $\rev(\D,
\vc{r})$.

\begin{lemma}\label{lemma:trimlemma}
  Given independent bid distributions $\D_i$, $\D= \D_1 \times \hdots \D_n$
  and a reserve price vector $\vc{r}$, there exists a vector of reserve
  prices $\vc{r}'$ such that:
  $$\rev_L(\vc{\D}, \vc{r}) \leq \rev_L(\vc{\D}', \vc{r}')$$
  where $\D' = \D'_1 \times \hdots \times \D'_n$ and $\D'_i = \trim(\D_i; r'_i)$.
\end{lemma}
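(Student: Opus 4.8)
\textbf{Proof plan for Lemma~\ref{lemma:trimlemma}.}

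The plan is to construct $\vc{r}'$ so that, for each bidder $i$, the trimmed distribution $\D'_i = \trim(\D_i; r'_i)$ mimics the effect of the lazy reserve $r_i$ but \emph{pushes all the ``priced out'' mass down to $0$} rather than leaving it in place. The key observation is that in the lazy auction with reserves $\vc{r}$, a bidder $i$ whose bid falls below $r_i$ never contributes positively: if $i$ is the top bidder the item goes unsold, and if $i$ is not the top bidder then $i$'s sub-reserve bid can only \emph{lower} the price paid by the true winner (or be irrelevant). So replacing $b_i$ by $0$ whenever $b_i < r_i$ can only help the lazy revenue. Concretely, I would set $r'_i = r_i$ and first show the intermediate inequality $\rev_L(\vc{\D}, \vc{r}) \le \rev_L(\vc{\D}', \vc{r})$ where $\D'_i = \trim(\D_i; r_i)$; this is a coupling argument, drawing $\vc{b} \sim \D$ and $\vc{b}' \sim \D'$ on the same probability space with $b'_i = b_i \cdot \vc{1}\{b_i \ge r_i\}$. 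Here independence is what lets us trim the coordinates separately without disturbing the joint law.

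The second step is to argue $\rev_L(\vc{\D}', \vc{r}) \le \rev_L(\vc{\D}', \vc{r}')$ for a possibly \emph{different} reserve vector $\vc{r}'$ — but in fact once every $\D'_i$ is already supported on $\{0\} \cup [r_i, \infty)$, the reserve $r_i$ is ``self-consistent'' and I would simply take $\vc{r}' = \vc{r}$, so this step collapses and the lemma follows directly from the coupling. The only subtlety to check in the coupling is the case analysis on who wins: (a) if the lazy winner under $\vc{b}$ is some bidder $k$ with $b_k \ge r_k$, then $b'_k = b_k$, bidder $k$ still clears her reserve under $\vc{b}'$, and the price is $\max(r_k, \max_{j\ne k} b'_j) \le \max(r_k, \max_{j \ne k} b_j)$ — wait, this is the wrong direction, so I need to be more careful: trimming lowers the losers' bids, which lowers the second-price payment. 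To get the inequality in the direction I want, I should instead observe that trimming a \emph{loser} $j$ (with $b_j < r_j \le$ or $> b_k$) to $0$ can only decrease the runner-up bid, hence decrease revenue — so pure trimming does \emph{not} obviously help.

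This tension is exactly where the real content lies, and it tells me the construction must be cleverer than naive trimming with $\vc{r}' = \vc{r}$. The fix I anticipate: choose $r'_i$ to be the \emph{optimal lazy reserve against the trimmed competition}, or equivalently raise some reserves to recapture the revenue lost when a loser's bid was trimmed away — precisely because under independence the event ``$i$ wins'' and the distribution of the runner-up bid factor nicely, and Proposition~\ref{prop:linear_lazy}'s characterization of optimal lazy reserves can be applied coordinatewise to the product of trimmed marginals. So the main obstacle, and the step I expect to spend the most care on, is showing that after trimming we can \emph{re-optimize the reserves on the trimmed instance} and the resulting lazy revenue is at least the original lazy revenue; the trimmed instance is ``nicer'' because each marginal is already a two-part (atom at $0$, mass above $r'_i$) distribution, which is exactly the form that makes eager and lazy coincide in the downstream argument of Theorem~\ref{theorem:eager_better_independence}. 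Independence is used twice: once to trim marginals without changing the joint law, and once to decompose the lazy revenue as a sum over ``bidder $i$ wins'' events whose probabilities and conditional runner-up distributions are products of the marginals, so that reserve re-optimization decomposes coordinatewise.
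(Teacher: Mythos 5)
You have correctly identified the central obstacle: naive trimming with $\vc{r}'=\vc{r}$ fails because zeroing out a sub-reserve loser's bid removes its price-setting effect on the winner, so the lazy revenue can strictly drop. Your proposed repair --- raise other bidders' reserves to recapture the lost price-setting --- is indeed the idea behind the paper's construction. But your write-up stops exactly where the proof has to begin: you explicitly defer ``showing that after trimming we can re-optimize the reserves on the trimmed instance and the resulting lazy revenue is at least the original lazy revenue,'' and that statement is essentially the lemma itself, so nothing has been proved. Moreover, the specific repair you suggest (take $\vc{r}'$ to be the \emph{optimal} lazy reserves against the trimmed marginals, invoking Proposition~\ref{prop:linear_lazy} coordinatewise) does not obviously close the gap: optimality of $\vc{r}'$ on the trimmed instance only gives $\rev_L(\D',\vc{r}') \ge \rev_L(\D',\vc{r})$, whereas you still need to compare $\rev_L(\D',\cdot)$ against $\rev_L(\D,\vc{r})$, and your own coupling shows that this comparison goes the wrong way.

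The missing construction is concrete. Sort bidders so that $r_1 \ge \cdots \ge r_n$ and process them one at a time. When processing bidder $i$, pick (by an averaging argument over the conditional distribution) a value $x < r_i$ with $\E_{\D}[\rev_L(\vc{b};\vc{r}) \mid b_i = x] \ge \E_{\D}[\rev_L(\vc{b};\vc{r}) \mid b_i < r_i]$; then trim $\D_i$ at $r_i$ and raise $r_j$ to $\max(r_j,x)$ for every $j>i$. The key point is that in a lazy auction a reserve of $x$ imposed on the other bidders exactly simulates the price-setting role of a phantom bid $b_i=x$ (which, being below $r_i$, can never win), while conditioned on $b_i \ge r_i$ the raised reserves are never binding because any winner must outbid $b_i \ge r_i > x$. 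Decomposing the revenue by bidder and by the event $\{b_i \ge r_i\}$ then shows each iteration cannot decrease $\rev_L$. The decreasing-reserve processing order is also load-bearing: it guarantees that already-trimmed bidders $j<i$ have $r_j \ge r_i > x$, so their reserves (and hence their trim points) are never disturbed by later iterations. Your plan contains the right intuition but none of these steps, so as written it is a gap rather than a proof.
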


First we show how to use Lemma \ref{lemma:trimlemma} to prove Theorem
\ref{theorem:eager_better_independence}:

\begin{proof}[of Theorem \ref{theorem:eager_better_independence}]
  Let $\vc{r}'$ and $\D'$ be as in Lem\-ma \ref{lemma:trimlemma}. Notice that for
  $\vc{b} \sim \D'$, $\Pr[0 < b_i < r_i] = 0$. Since a bidder is never blocked by the 
  reserve, the revenue in both the
  eager and lazy auctions is the maximum of the second highest bid
  and the reserve of the highest bidder. This implies that $\rev_E(\vc{\D}', \vc{r}') =
  \rev_L(\vc{\D}', \vc{r}')$. Since $\D$ stochastically dominates $\D'$, there is a
  distribution on pairs of vectors $(\vc{b'}, \vc{b})$ such that the marginals
  are $\D'$ and
  $\D$ and $b'_i \leq b_i$ for all $i$ for every realization of the random
  variables. Hence, $\rev_E(\vc{b'};\vc{r}')]
  \leq \rev_E(\vc{b};\vc{r}')$. Taking expectations we conclude that
  $$\rev_E(\vc{\D}', \vc{r}') \leq \rev_E(\vc{\D},
      \vc{r}').$$
  
  Putting this together with the inequality from Lemma \ref{lemma:trimlemma}:
  $$\begin{aligned}
    \rev_L(\vc{\D}, \vc{r})  \leq \rev_L(\vc{\D}', \vc{r}') = \rev_E(\vc{\D}',
    \vc{r}')  \leq \rev_E(\vc{\D},\vc{r}')
\end{aligned}$$
\end{proof}

Now, all is left to do is to prove Lemma \ref{lemma:trimlemma}:

\begin{proof}[of Lemma \ref{lemma:trimlemma}]
  Assume that the bidders are sorted such that $r_1 \geq r_2 \geq \hdots \geq
  r_n$. We will define an algorithmic procedure that iterates through bidders
  $1$ to $n$ and at each iteration $i$, trims the distribution of the $i$-th
  bidder and possibly increases the reserves of agents $j > i$. It is useful to
  think of $\vc{r}$ and $\D = \D_1 \times \hdots \times D_n$ as variables that
  are updated in the course of the procedure.

  The procedure will keep the following invariants: (i) $\rev_L(\D, \vc{r})$
  cannot decrease; (ii) for all bidders already processed, their distribution is
  trimmed at their reserve, i.e., $\Pr[0 < b_i < r_i] = 0$. (iii) the reserve
  prices will continue to be sorted.

  Now, we are ready to describe each iteration. When we process bidder $i$, we
  perform the following procedure:

  \begin{framed}
  \noindent Choose $x < r_i$ such that:
  $$\E_{\D}[\rev_L(\vc{b};\vc{r}) \vert b_i = x] \geq \E_{
  \D}[\rev_L(\vc{b};\vc{r}) \vert b_i < r_i] $$
  Set $\D_i = \trim(\D_i, r_i)$ and $r_j = \max(r_j, x)$ for $j > i$.
  \end{framed}
  Clearly we maintain invariants (ii) and (iii). Now, we only need to argue that
  invariant (i) is also maintained. It is convenient to
  write $\rev_L(\vc{b}, \vc{r}) = \sum_j \rev^j_L(\vc{b}, \vc{r})$ where
  $\rev^j_L(\vc{b}, \vc{r})$ is the revenue obtained from bidder $j$.

  First notice that $\E_{\D}[\rev_L^i(\vc{b}; \vc{r})] = \E_{\D'}[\rev_L^i(\vc{b};
  \vc{r}')]$ since the lazy auction just extracts revenue from $i$ when she is
      the highest bidder and above her reserve (and those events are unaffected
      by trimming). Also, in the lazy auction, the reserves on bidders other than
      the highest bidder do not affect the outcome.

  For all other $j \neq i$, notice that $\E_{\D}[\rev_L^j(\vc{b}; \vc{r}) \vert
  b_i \geq r_i] = \E_{\D'}[\rev_L^j(\vc{b}; \vc{r}') \vert b_i \geq r_i]$. 
  Conditioned on $b_i > r_i$ the only thing changing in the two scenarios is
  that the reserve price of $j$ is now $r'_j = \max(r_j, x)$; since $x <
  r_i$, it cannot be binding as bidder $i$ is bidding $b_i > r_i$. (Notice that for
  bidders $j < i$, their reserve price was already at least $r_i$, so $r'_j =
  r_j = \max(r_j, x)$ since $x < r_i < r_j$.)

  Finally, $\E_{\D'}[\rev_L^j(\vc{b}; \vc{r}') \vert b_i < r_i] =
  E_{\D}[\rev_L^j(\vc{b}; \vc{r}) \vert b_i = x]$, since in the first case $i$
  must be bidding zero and $j$ is subject to reserve $r_j = \max(r_j, x)$, but
  this reserve can be implemented by having $i$ bid $x$.

  Combining all of the expressions, we get:
  $$\begin{aligned}
    &\E_{\D'}[\rev_L(\vc{b}, \vc{r'})] \\ & = \E_{\D'}[\rev_L^i(\vc{b},
  \vc{r'})] \\ & \quad +  \sum_{j \neq i }\E_{\D'}[\rev_L^j(\vc{b},\vc{r}')
    \vert b_i \geq
  r_i] \Pr(b_i \geq r_i) \\ & \quad + \sum_{j \neq i
  }\E_{\D'}[\rev_L^j(\vc{b},\vc{r}') \vert b_i <
  r_i] \Pr(b_i < r_i) \\
  & =  \E_{\D}[\rev_L^i(\vc{b}, \vc{r})]  \\ & \quad +  \sum_{j \neq i
}\E_{\D}[\rev_L^j(\vc{b},\vc{r}) \vert b_i \geq
  r_i] \Pr(b_i \geq r_i) \\ & \quad + \sum_{j \neq i
}\E_{\D}[\rev_L^j(\vc{b},\vc{r}) \vert b_i =  x] \Pr(b_i <r_i) \\
  \end{aligned}$$

  Since $\E_{\D}[\rev_L^i(\vc{b}; \vc{r}) \vert b_i < r_i] = 0$, $x$ was picked
  such that 
  $$\sum_{j \neq i}\E_{\D}[\rev_L^j(\vc{b};\vc{r}) \vert b_i = x] \geq \sum_{j
  \neq i}\E_{
  \D}[\rev_L^j(\vc{b};\vc{r}) \vert b_i < r_i] $$
  Plugging this into the last expression, we get that:
  $$\E_{\D'}[\rev_L(\vc{b}, \vc{r'})] \geq \E_{\D}[\rev_L(\vc{b}, \vc{r})]$$
\end{proof}

\section{Computing Optimal Reserves}
\label{sec:complexity}
In this section we investigate the computational complexity of computing 
the optimum reserve prices in the eager and lazy settings. We assume that 
the input to the problem is given as a set of bids submitted to previous auctions. 
Since the number of auctions run daily is extremely large, it is important 
for this procedure to be linear, or nearly linear in the size of the input. 

We prove that from a computational perspective, lazy and eager
auctions are vastly different. Computing the optimal vector of lazy reserves
$\vc{r}_L^*$ can be done in linear time. On the other hand, computing the
optimal vector of eager reserves $\vc{r}^*_E$  is NP-hard.

\begin{proposition}\label{prop:linear_lazy}
The optimal vector of lazy reserves $\vc{r}_L^*$ can be computed in nearly linear time
in the size of input logs.
\end{proposition}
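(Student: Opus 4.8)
The plan is to exploit the fact that in the lazy auction the reserve $r_i$ of bidder $i$ affects the outcome only when $i$ submits the highest bid. Concretely, on a bid vector $\vc{b}$ whose highest bid $b_i$ belongs to $i$ and whose second-highest bid is $\beta$, the revenue equals $\beta$ if $r_i \le \beta$, equals $r_i$ if $\beta < r_i \le b_i$, and equals $0$ if $r_i > b_i$; in particular it is independent of $r_j$ for every $j \ne i$. Hence, writing $\D$ for the empirical distribution over the input logs (and fixing any tie-breaking rule so that each log has a well-defined top bidder), the objective decomposes as
$$\E_{\vc{b} \sim \D}[\rev_L(\vc{b}, \vc{r})] = \sum_{i=1}^n \E_{\vc{b} \sim \D}\big[\rev_L(\vc{b}, \vc{r}) \cdot \vc{1}\{i \text{ is the top bidder}\}\big],$$
where the $i$-th summand depends on $\vc{r}$ only through $r_i$. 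So the optimization over the vector $\vc{r}$ splits into $n$ independent one-dimensional problems, one per bidder.

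Next I would solve each one-dimensional problem. Fix bidder $i$ and let $L_i$ be the collection of logs in which $i$ is the top bidder; for $\ell \in L_i$ let $w_\ell$ denote $i$'s (winning) bid and $s_\ell \le w_\ell$ the second-highest bid in that log. With $r_i = \rho$, the total contribution of $L_i$ is $g_i(\rho) = \sum_{\ell \in L_i} h_\ell(\rho)$, where $h_\ell(\rho) = s_\ell$ for $\rho \le s_\ell$, $h_\ell(\rho) = \rho$ for $s_\ell < \rho \le w_\ell$, and $h_\ell(\rho) = 0$ for $\rho > w_\ell$. Each $h_\ell$ is piecewise linear, non-decreasing, and continuous except for a downward jump at $\rho = w_\ell$; therefore $g_i$ is piecewise linear, non-decreasing on each piece, with all of its (downward) discontinuities at the values $\{w_\ell : \ell \in L_i\}$. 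Consequently the maximum of $g_i$ is attained either at $\rho = w_\ell$ for some $\ell$ (where that log contributes $w_\ell$) or in the limit $\rho \to 0^+$ (the ``no reserve'' option, worth $\sum_\ell s_\ell$), so it suffices to evaluate $g_i$ at the $O(|L_i|)$ breakpoints.

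To implement this in near-linear time, I would first scan each input auction once, extracting its top two bids in time linear in the number of bids it contains, and appending the pair $(w_\ell, s_\ell)$ to the list of its winner $i$; since every log lands in exactly one such list, $\sum_i |L_i|$ is at most the number of logs. For each $i$ I then sort the breakpoints $\{w_\ell\} \cup \{s_\ell\}$ and sweep $\rho$ upward, maintaining the running sum $\sum_{\ell : \rho \le s_\ell} s_\ell$ of not-yet-activated logs and the count $|\{\ell : s_\ell < \rho \le w_\ell\}|$ of active logs; these two quantities determine $g_i$ at every breakpoint, so the best $\rho$ is found in $O(|L_i| \log |L_i|)$ time, and this value is $(\vc{r}_L^*)_i$. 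Summing over $i$ gives total running time $\Otil(N)$, where $N$ is the size of the input logs, and by the decomposition the assembled vector $\vc{r}_L^*$ is globally optimal.

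I expect the main obstacle to be getting the first step exactly right: formalizing the separability, fixing the tie-breaking convention so that the indicator events partition the sample space, and arguing that the reserves of non-winning bidders genuinely do not influence the lazy outcome. Once separability is established, identifying the $O(|L_i|)$ candidate reserves from the piecewise-linear structure of $g_i$ and the bookkeeping for the sweep are both routine.
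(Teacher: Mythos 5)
Your proposal is correct and follows essentially the same route as the paper's proof: decompose the objective over the bidder who tops each log (so each $r_i$ is optimized independently on the logs it wins), restrict to the $O(|Q_i|)$ breakpoints given by the top two bids, and evaluate all candidates with a single sorted sweep maintaining a running count of reserve-bound logs and a running sum of second-highest bids. The only cosmetic difference is that your piecewise-linear analysis shows the winning-bid breakpoints alone suffice as candidates, whereas the paper also carries the second-highest bids; this does not change the $\Otil(N)$ bound.
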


\begin{proof}
Given bids $\{b_{i,t}\}$, for each query $t$, let $w_t$ correspond to the agent
that would win if no reserve prices were set. For any given vector of reserves
$\vc{r}$, either $w_t$ wins query $t$ or no agent wins. Let $Q_i$ be the queries
for which $w_t = i$. Then we can write the revenue for the vector $\vc{r}$ of
reserves as:
$$\rev_L(\vc{b};\vc{r}) = \sum_i \sum_{t \in Q_i} \vc{1} \{ b_{t}^{(1)} \geq r_i \}
\cdot \max(r_i, b_t^{(2)})$$
where $b_t^{(1)}$ $b_t^{(2)}$ are respectively the highest and second highest
bid for query $t$. The previous expression shows that for lazy reserves, the
problem of computing reserve prices can be decoupled for every $i$. Also,
the optimal reserve price should be of
the form $b_t^{(2)}$ or $b_t^{(1)}$ for some $t$. If not, we can increase it to
the next point and the revenue can only increase.

This observation gives  an algorithm for computing the optimal vector of reserves with running time $O(\sum_i
\abs{Q_i}^2)$. In order to turn it
into a nearly linear time algorithm, notice that if we sort the bids
appropriately, we can compute the revenue for setting each $r_i = b_t^{(1)}$
and $r_i = b_t^{(2)}$ in constant time.

In order to do so, construct an array with all the bids $b_t^{(1)}$ and
$b_t^{(2)}$ for $t \in Q_i$ and annotate each entry of weather it is a highest
bid or a second highest bid. If $n_i = \abs{Q_i}$, this is an array of $2 n_i$
numbers. Sort the array by bids in increasing order,
which takes $O(n_i \log n_i)$, and call its elements $r_1 \leq r_2 \leq \hdots
r_{2n_i}$. Assume for simplicity that all elements are distinct. Given a certain
$r_j$, we can write the revenue associated with it $R(r_j) = \sum_{t \in Q_i}
\vc{1} \{ b_{t}^{(1)} \geq r_j \} \cdot \max(r_j, b_t^{(2)})$ as $r_j \cdot k_j
+ s_j$ where $s_j = \sum_{t \in Q_i} b_t^{(2)} \cdot \vc{1} \{ r_j \leq
b_t^{(2)} \}$.

If we show how to compute $(s_{j+1},k_{j+1})$ from $(s_j,k_j)$ in constant time,
we have a nearly-linear time algorithm. Doing it is easy. This can be done in
two cases:
\begin{itemize}
\item $r_j = b_t^{(1)}$ for some $t$. Therefore increasing the reserve past $b_t^{(1)}$ will
make query $t$ to be unallocated. All other queries are unaffected. Set $k_{j+1}
= k_j - 1$ and $s_{j+1} = s_j$.
\item $r_j = b_t^{(2)}$ for some $t$. Therefore increasing the reserve past $b_t^{(2)}$
will cause the reserve price to bind for query $t$, instead of the second
highest bid. Update: $k_{j+1} = k_j + 1$, $s_{j+1} = s_j -r_j$.
\end{itemize}
Since we can reconstruct the revenue for each value in the array, we can choose
the optimal vector of reserves in time $O(\sum_i \abs{Q_i} \cdot \log \abs{Q_i})$.
It is not hard to see that this algorithm still works if values of the array are
repeated.
\end{proof}

\begin{proposition}\label{prop:nphard}
Computing the optimal vector of eager reserves $\vc{r}$ given the bids
from a set of previous auctions is NP-hard. 
\end{proposition}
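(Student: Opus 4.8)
The plan is to reduce from a known NP-hard problem that has the same combinatorial flavor as setting eager reserves: when we raise bidder $i$'s reserve above $b_i$ in some auction, we \emph{remove} $i$ from that auction, which can either hurt (if $i$ was the winner paying its reserve or a high price) or help (if $i$ was the runner-up suppressing the price paid by the true winner). This ``knock-out a bidder to change who wins and what they pay'' structure is reminiscent of covering/independent-set type problems, so I would aim for a reduction from \textsc{Vertex Cover} (or equivalently \textsc{Independent Set}), or possibly \textsc{Max-2-SAT} / \textsc{Set Cover}, depending on what gives the cleanest gadget.

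Concretely, here is the approach I would take. First, I would set up the instance so that each bidder $i$ has only two ``interesting'' reserve levels: a low level $r_i \le \min_t b_{i,t}$ (bidder always participates) and a high level $r_i > \max_t b_{i,t}$ (bidder never participates) — this is legitimate because the reserve is chosen freely and the bids in the logs are fixed, so effectively the optimizer is choosing a subset $S \subseteq [n]$ of bidders to ``keep.'' Thus the eager optimization problem over logs reduces to: choose $S \subseteq [n]$ to maximize $\sum_t \rev_E(\vc{b}_t; S)$, where $\rev_E(\vc{b}_t; S)$ runs a second-price auction among the bidders of $S$ present in auction $t$ (with their reserves, which in the two-level regime contribute nothing extra). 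Then I would design, for a graph $G=(V,E)$, a bidder for each vertex and an auction for each edge (plus possibly auxiliary ``anchor'' bidders and auctions to calibrate payoffs). In the auction $t_e$ for edge $e=(u,v)$, I would give bidders $u$ and $v$ high bids and arrange that the revenue is large exactly when \emph{at most one} of $u,v$ is in $S$ (because keeping both makes the second price high, or — depending on which direction is cleaner — keeping both makes them knock each other out badly); I would also add a per-vertex auction rewarding $i \in S$, so that the optimal $S$ trades off the per-vertex gain against the per-edge penalty, and this trade-off encodes exactly a vertex cover / independent set threshold.

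The key steps in order: (1) argue the two-level reduction so the problem becomes ``pick a subset $S$''; (2) present the gadget construction from the graph instance, specifying bids precisely (using a large separator constant $M$ to keep the contributions of different gadget types from interfering); (3) prove completeness — from a vertex cover (resp.\ independent set) of size $k$, build $S$ achieving revenue $\ge$ some target $T(k)$; (4) prove soundness — from any $S$ with revenue $\ge T(k)$, extract a cover/independent set of size $\le k$, handling the case where $S$ is ``suboptimal'' in some gadget by a local-move/cleanup argument showing we may assume $S$ behaves canonically on every gadget without loss of revenue; (5) conclude NP-hardness (and note NP-completeness since revenue of a given $\vc{r}$ on logs is poly-time computable, matching the abstract's claim).

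The main obstacle I expect is step (4), the soundness direction: because eager reserves have the non-monotone behavior highlighted in the paper's examples (removing a bidder sometimes \emph{raises} revenue), an adversarial $S$ might exploit cross-gadget interactions or behave non-canonically within a gadget in a way that still meets the target. Taming this requires either (a) making the gadgets ``modular'' — each auction involves only the bidders of one gadget plus fixed anchor bidders, so $\rev_E$ decomposes as a sum of per-gadget terms and there are no cross-gadget effects — and then (b) a case analysis within a single gadget showing the canonical choice dominates. Getting the anchor bidders / separator constants right so that the decomposition is exact and the per-gadget optimum is forced is where the real care goes; everything else is routine bookkeeping.
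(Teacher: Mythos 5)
Your high-level plan --- reduce from \textsc{Independent Set}/\textsc{Vertex Cover} with a bidder per vertex, a query per edge in which the two endpoints bid, auxiliary per-vertex queries, and a local-exchange argument for soundness --- is exactly the route the paper takes. But your step (1), the claimed reduction to subset selection where each reserve is either below all of bidder $i$'s logged bids (never binds) or above all of them ($i$ never participates), is not merely unjustified; it is false, and it destroys the hardness. In the eager auction, adding a participating bidder whose reserve does not bind as a payment can only help: if she does not win, the second-highest bid among participants can only go up; if she does win, she pays at least the previous winner's bid, which is at least the previous payment $\max(r_j, b^{(2)})$ since that winner met her own reserve. So under your two-level regime the revenue of every query is monotone in the kept set $S$, the optimum is to keep everyone, and the problem you reduce to is polynomial-time trivial. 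Relatedly, neither sign of your edge gadget is realizable in that regime: ``keeping both endpoints is penalized'' contradicts this monotonicity, and ``removing a vertex is rewarded'' has no mechanism, since a fully removed bidder contributes nothing anywhere.

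The missing idea is that the hardness lives precisely in \emph{intermediate} reserve levels that bind as payment floors in some queries while excluding the bidder from others. The paper's gadget: for each edge $(u,v)$ a query with $b_u=b_v=L$ and all others zero; for each vertex $u$ a query with $b_u=H$ and all others zero; with $L<H<2L$. Setting $r_u=H$ does \emph{not} remove $u$ from her own vertex query --- it raises her payment there from $L$ to $H$, a gain of $H-L$ --- while it does remove her from every incident edge query, which costs $L$ exactly on those edges whose other endpoint is also priced out. Since $L>H-L$, an exchange argument (your anticipated step (4), here a one-liner) forces the high-reserve set to be independent, and total revenue is $L(|E|+|V|)+(H-L)\cdot|I|$, so optimizing reserves solves maximum independent set. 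No anchor bidders or large separator constants are needed. Rework your construction around reserves that double as payment extractors rather than pure participation switches and the rest of your outline goes through.
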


\begin{proof}
We give a reduction to the independent set problem. Let $L$ and $H$ be constants that we will chose later. 
Given a graph $G =
(V,E)$, we map the independent set problem on this graph to the following
instance of the reserve price problem with eager reserves: consider $\abs{V}$
agents and $\abs{E} + \abs{V}$ queries. 
For each edge $e =
(u,v)$ consider queries where $b_u = L$, $b_v = L$ and all other agents bid
zero. And for each node $u$, consider queries with $b_u = H$ and all other
agents bid zero. We select $L$ and $H$ such that $L < H < 2L$.

Clearly, for the optimal vector of reserves, $r_i \in \{L,H\}$. Also, notice
that in the optimal solution, the set of nodes $I = \{u \in V; r_u = H\}$ must form
an independent set. Indeed, if there is an edge $e = (u,v)$ with $r_u = r_v =
H$, then there is zero revenue from the queries corresponding to edges $e$. If
we switch either $u$ or $v$ to have reserve $L$, then we gain $L$ revenue from
edge $e$ and lose $H-L$ from node $u$. Since $L > H-L$, this is a profitable
deviation.

The revenue associated with setting $u \in I$ to $H$ and other nodes to $L$ is
given by $$L\cdot (\abs{E} + \abs{V}) + (H-L) \cdot \abs{I}$$
So the optimal vector of reserve prices would give a solution to the maximum
independent set problem.
\end{proof}

Combining the proof of Theorem \ref{thm:lazy_eager_bound} and Proposition
\ref{prop:linear_lazy} we get a $2$-approximation to the optimal revenue
obtained by running an eager second price auction.

We remark that the algorithm proposed in Proposition \ref{prop:linear_lazy} is
different from the heuristic proposed by Hartline and Roughgarden in
\cite{HartlineR09}: their heuristic consists in choosing the vector of monopoly
reserve prices $\vc{r}_M$ such that $r^M_{i} 
= \argmax_{r \geq 0} \sum_{t=1}^T r \cdot \vc{1}\{ b_{i,t} \geq r \}$. They observe
that while this is a very good choice when the valuations are independent and
follow the monotone hazard rate condition, this can be arbitrarily far from the
revenue of the optimal auction if either condition is violated.
We complement the observation showing that this vector can also be arbitrarily
far from the optimal revenue of the second price auction with personalized
reserves (which is a weaker benchmark than the optimal auction).

\begin{claim}
  For every constant $C$, there is a distribution of bids such that
  $\rev(\vc{b}; \vc{r}^M) \leq \frac{1}{C} \rev(\vc{b}; \vc{r}^*)$ for both the
  eager and lazy auctions.
\end{claim}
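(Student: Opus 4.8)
The plan is to construct an explicit distribution of bids on which the monopoly reserve vector $\vc{r}^M$ is badly suboptimal for both auction variants, exploiting the fact that monopoly reserves are computed in a bidder-by-bidder manner that is blind to the correlation structure and to which bidder is actually winning. First I would set up a two-bidder instance. Bidder $1$ should almost always bid a small amount $1$ but occasionally bid a large amount $M$; crucially, on the queries where bidder $1$ bids $M$, bidder $2$ should bid something just below $M$ (say $M-\epsilon$), while on the common queries bidder $2$ bids $0$. The point is that the monopoly reserve for bidder $2$, computed as $\argmax_r r\cdot\Pr[b_2\ge r]$, will be driven by the rare high event: if that event has probability roughly $1/M$, then setting $r_2^M\approx M$ yields monopoly revenue $\approx 1$ from bidder $2$, which ties or beats the revenue from setting $r_2 = 0$ (revenue $0$, since $b_2$ is $0$ the rest of the time). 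Hence $\vc{r}^M$ puts a reserve near $M$ on bidder $2$.

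Next I would show that this choice is disastrous in the actual auction. With $r_2^M\approx M$, on every high-value query bidder $2$ is priced out (she bids $M-\epsilon < r_2^M$), so the winner is bidder $1$ at price $\max(r_1^M, 0)$, and with $r_1^M = 1$ this contributes only $O(1)$ per high query, hence $O(1/M)$ in expectation — the same order as the common queries. So $\rev(\vc{b};\vc{r}^M) = O(1)$ in both the lazy and eager regimes. By contrast, the vector $\vc{r}^* = (1, 0)$ (or more simply $\vc{r} = (0,0)$, the no-reserve auction) extracts the second-highest bid on every query: on the roughly $1/M$ fraction of high queries it collects $M-\epsilon$, for expected revenue $\Theta(1)$ per such query times probability $\approx 1/M$, i.e. $\Theta(1)$, plus the $\Theta(1)$ from common queries — but by scaling the numbers I can make the high-query contribution dominate by any desired factor. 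Concretely, pick the probability of the high event to be $p$ and the high bid to be $M$ with $pM$ large; then monopoly reserve revenue stays $O(\max(1, pM\cdot\text{(price imposed)}))$ bounded away from $pM\cdot M$, whereas the no-reserve second-price revenue is $\Theta(pM)$ on the high queries. Choosing $M$ large enough relative to $C$ makes the ratio exceed $C$. I would state the parameters so the same instance works verbatim for both auctions, noting that with only two bidders and $r_2$ binding, lazy and eager behave identically on the relevant queries.

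The one subtlety — and the step I expect to require the most care — is pinning down the monopoly reserve $r_1^M$ for the low bidder and verifying it does not accidentally fix the problem. Since bidder $1$ bids $1$ with probability $1-p$ and $M$ with probability $p$, the monopoly objective is $\max(1\cdot 1,\ M\cdot p) = \max(1, Mp)$; if I choose $p$ slightly below $1/M$ so that $Mp < 1$, then $r_1^M = 1$, and the argument above goes through cleanly. (If instead $Mp>1$, the monopoly reserve for bidder $1$ would itself be $M$, which only makes $\vc{r}^M$ worse, so either regime of $p$ is fine — but it is cleanest to fix $p$ just under $1/M$.) I would also double-check the tie-breaking/perturbation issue so that "second-highest bid" is well defined on the high queries ($b_2 = (1-\epsilon)b_1$ handles this as in Example~\ref{claim:lazy_better}). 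Finally I would remark that this instance has correlated bids by design, and that correlation is essential: without it the Hartline–Roughgarden guarantee would kick in under MHR, consistent with the claim being a genuine separation from the independent case.
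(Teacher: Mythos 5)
There is a genuine gap: your two-point construction cannot produce the required separation, for two reasons. First, the specific claim that bidder $2$ is priced out by her monopoly reserve is false. Her bid distribution is supported on $\{0, M-\epsilon\}$, so $r_2^M = \argmax_r r\cdot\Pr[b_2\ge r]$ is exactly $M-\epsilon$, and since the reserve condition in this paper is $b_i \ge r_i$, she meets her reserve on every high query. The auction then charges bidder $1$ the price $\max(r_1^M, M-\epsilon) = M-\epsilon$, which is exactly what the no-reserve (or $\vc{r}^*$) auction collects, so nothing is lost on the high queries. Second, and more fundamentally, even if you perturb the instance so that bidder $2$ \emph{is} priced out, the separation is capped at a constant. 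In the regime $pM \le 1$ you have $r_1^M = 1$, but then the benchmark revenue is at most $(1-p)\cdot 1 + p(M-\epsilon) \le 1 + pM \le 2$ while the monopoly-reserve auction still collects at least $1$ on almost every query, a ratio below $2$. In the regime $pM > 1$ you have $r_1^M = M$, and then on every high query bidder $1$ pays $\max(M, \cdot) = M$, so the monopoly-reserve auction collects $pM$, matching the benchmark up to the contribution of the common queries. The underlying obstruction is that for a two-point marginal, $\E[b] \le 2\max_r r\Pr[b\ge r]$ (and for a $k$-point marginal, $\E[b] \le k\max_r r\Pr[b\ge r]$), whereas your benchmark auction extracts at most $\E[b^{(2)}] \le \E[b_1]$; no choice of $p$ and $M$ escapes this.

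The missing idea is that you need $\Omega(C)$ ``equal-revenue'' levels, not two. The paper's construction takes two perfectly correlated bidders with $b_1=b_2=2^{k-1}$ with probability $2^{-k}$ for $k=1,\dots,K-1$ (and a matching top atom). Every level contributes the same amount $\tfrac12$ to $\E[b]$, so the no-reserve second-price auction, which extracts the full (common) bid, earns $\Theta(K)$; but $\max_r r\Pr[b\ge r]$ is $\Theta(1)$ at every level, so the monopoly reserve sits at the top atom and earns only $\Theta(1)$. Taking $K > C$ finishes the proof. Your instinct to exploit correlation is correct and is shared by the paper's example (with independent bidders the second-highest bid of equal-revenue draws has bounded expectation, so correlation really is doing work here), but the two-level template must be replaced by a logarithmic cascade of levels for the argument to go through.
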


\begin{proof}
  Consider $2$ bidders and consider a distribution on bids that sets 
   $b_1 = b_2 = 2^{k-1}$ with probability $\frac{1}{2^k}$
  probability, for $k = 1, \ldots, K-1$, and $b_1 = b_2 =
  2^{K-1}+\epsilon$ with probability $\frac{2}{2^K}$. The monopoly reserve
  prices are $r^M_u = 2^{K-1}+\epsilon$ generating revenue $1+\epsilon \cdot
  2^{1-K}$. The optimal vector of reserves is zero for both eager and lazy
  reserves: $\rev(\vc{b};\vc{0}) = K + \epsilon \cdot 2^{1-K}$.
\end{proof}

\begin{figure*}[tb!]
\label{fig:uniform}
\centering
\includegraphics{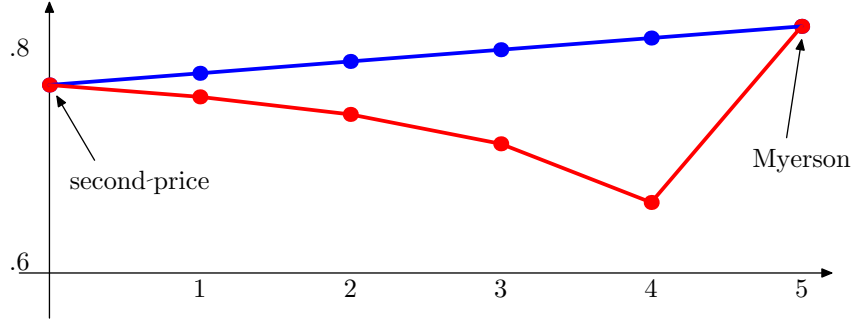}
\caption{$\rev_E(k)$ in red and $\rev_L(k)$ in blue for $n = 5$ agents with iid uniform distributions}
\end{figure*}

\section{A/B Testing}
\label{sec:ab}
In the previous sections we described  auctions with eager and lazy reserve prices,
 and compared the differences in welfare, revenue, and computational complexity 
 of the two approaches. The theoretical models are clean and elegant, and allow us 
 to abstractly reason about the benefits of one approach over another; however they do not 
 capture the messy realities of setting reserve prices in practice. 
 
Before releasing a new model into the wild, a key step is measuring the impact of its change
on a small sample of traffic. This too is a non-trivial step, and requires a lot of care both in setting up 
the experiments~\cite{KohaviKDD07} and effectively measuring long term, as well as, short term impacts~\cite{TangKDD15}, 


One way to test reserve prices is to partition all of the auctions into treatment 
and control, and then only apply reserves in treatment scenarios. However, such a test 
gives a biased estimate of the revenue lift---since the effect on an individual buyer 
is small (as the test applies only to a small fraction of the auctions),  the buyer is unlikely 
to react strategically, for example changing her bidding behavior, or looking for alternative 
places to buy the impressions. 

A different approach is to partition the bidders into treatment and control groups, and only 
apply reserve prices to the bidders in treatment. It is easy to see that applying reserve prices to only a fraction 
of the bidders  will reduce the overall revenue gains, however since a single buyer 
is now subject to reserve prices on all of her auctions, she is more likely to update her 
behavior in response.  

It is natural to expect that applying a vector of optimal reserves to a
subset of the buyers would yield an improvement in revenue that would allow us
to evaluate the impact of applying reserves to all of the buyers.
Counterintuitively, we show that even in the simplest possible setting (iid buyers
with regular distributions), applying the Myerson reserve price to a subset of the bidders 
is {\em worse than not applying any reserve prices at all}!  In other words, as we add bidders 
to the treatment group, the total revenue {\em decreases}, and it is only when all of the bidders
are treated that we realize the revenue gains.  As we show in
Section \ref{sec:empirical}, this phenomenon is not a purely theoretical construct, 
but is also not uncommon in practice.

We state the results formally. 

\begin{theorem}\label{thm:revenue-testing-eager}
  Assume agents are iid with regular distribution $F$ and let 
  $\rev_E(k)$ be the revenue obtained from applying the Myerson reserve price eagerly to
  $k$ out of $n$ agents and applying no reserve to the remaining agents. Then:
  $$\rev_E(0) \geq \rev_E(1) \geq \rev_E(2) \geq \hdots \geq
  \rev_E(n-1)$$
\end{theorem}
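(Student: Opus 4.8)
The plan is to compare $\rev_E(k)$ and $\rev_E(k+1)$ directly by a coupling argument on the bid realizations, conditioning on the relative order statistics of the $n$ i.i.d.\ bids. Fix the reserve to be the Myerson reserve $\rho = \phi^{-1}(0)$, and partition the agents into a ``treated'' set $T$ with $|T| = k$ and an ``untreated'' set $U$ with $|U| = n-k$; by the i.i.d.\ assumption the revenue depends only on $|T|$, not on which agents are in $T$. First I would write the eager revenue on a fixed bid vector $\vc{b}$ as a function of three quantities: the highest bid overall, the highest treated bid, and the highest untreated bid (together with whether the relevant bids clear $\rho$). The winner in the eager auction with treated set $T$ is the highest bidder in $S = U \cup \{i \in T : b_i \ge \rho\}$, and the price is the max of the winner's reserve (which is $\rho$ if the winner is treated, $0$ otherwise) and the second-highest bid in $S$.

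The key step is to move one agent, say agent $m$, from $U$ into $T$ and show the expected revenue cannot increase. I would condition on the unordered multiset of the other $n-1$ bids and on the value $b_m$, and argue the inequality holds in expectation over $b_m$ (or even pointwise after a suitable averaging). The cases to track: (i) if $b_m < \rho$, then in the $(k+1)$-treated auction agent $m$ is discarded, so the auction among the remaining agents is exactly a $k$-treated auction on those $n-1$ agents — removing a low bidder can only lose revenue relative to keeping him untreated, where he could at least serve as a price-setter or even winner at price $\ge$ something; (ii) if $b_m \ge \rho$, then $m$ survives the filter and participates either way, but his reserve is now binding, which can only \emph{lower} the price he pays if he wins, and does not change the outcome when someone else wins with a higher bid. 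The delicate subcase is when $m$ is the winner: untreated, he pays the second-highest surviving bid (which could be below $\rho$); treated, he pays $\max(\rho, \text{second-highest surviving bid})$ — this \emph{helps}. So the genuinely adverse case is (i), where pricing agent $m$ out removes a potential second-price contributor. I would show that the loss in case (i) is exactly $\E[(\text{second-highest among the others' surviving bids that }m\text{ would have beaten}) \cdot \vc 1\{\ldots\}]$ and that this is never compensated, because in the complementary event the $(k+1)$-treated auction is weakly worse pointwise.

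The main obstacle will be handling case (i) cleanly: when the previously-untreated agent $m$ is filtered out, the auction on the remaining $n-1$ bidders is a $k$-treated auction, but the revenue comparison is against a $k$-treated auction on \emph{all $n$} bidders, so I need a separate monotonicity sub-claim of the form ``$\rev_E^{(n)}(k) \le \rev_E^{(n+1)}(k)$ does not directly help; instead I need $\rev_E^{(n)}(k) \ge \E[\text{revenue of } k\text{-treated auction on } n-1 \text{ bidders, where the dropped bidder had } b_m < \rho]$.'' I would establish this by noting that conditioned on $b_m < \rho$, adding agent $m$ back as an \emph{untreated} bidder with bid $b_m < \rho$ can only help: he can become the winner (paying some positive amount) or raise the second price, and in the worst case changes nothing. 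Assembling the two cases — adverse case (i) bounded below by the $(k+1)$-revenue, benign case (ii) pointwise domination — and taking expectations over $b_m$ and the other bids yields $\rev_E(k) \ge \rev_E(k+1)$, and chaining over $k$ gives the stated decreasing sequence. A cleaner alternative I would also try: directly couple the two auctions so that on every realization the $(k+1)$-treated revenue is at most the $k$-treated revenue plus a ``gap'' term with zero expectation, using that the extra reserve only differs from $0$ on agent $m$ and only matters when $b_m \in [\text{something}, \text{something}]$ — but I expect the case analysis above to be the robust route.
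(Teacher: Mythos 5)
There is a genuine gap, and it sits exactly where your case analysis contradicts itself. In your case (ii) ($b_m \ge \rho$) you first correctly observe that when $m$ wins, treating him changes his payment from the second-highest surviving bid to $\max(\rho,\;\text{second-highest surviving bid})$, i.e.\ the $(k+1)$-treated auction collects \emph{weakly more} on every such realization (``this helps''). But your final assembly relies on the opposite claim --- ``in the complementary event the $(k+1)$-treated auction is weakly worse pointwise'' --- which is false. So the two cases pull in opposite directions: case (i) ($b_m<\rho$) favors the $k$-treated auction pointwise, case (ii) favors the $(k+1)$-treated auction pointwise, and no pointwise or conditional coupling of the kind you describe can decide which expectation wins. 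Deciding it is the entire content of the theorem. A symptom of the problem is that your argument never uses the defining property of the Myerson reserve, namely that $\phi(v)\le 0$ for $v<\rho=\phi^{-1}(0)$; an argument that would go through for an arbitrary reserve cannot isolate why the case-(ii) gain (charging the winner $\rho$ instead of a low second price) is outweighed by the case-(i) loss (discarding a bidder who would have won or set the price).

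The paper avoids price accounting entirely. Since the eager auction is truthful, Myerson's lemma gives $\rev_E(k)=\E[\phi(v_{\mathrm{winner}})]$, and the allocation depends on $k$ only on the event $\{v^{(1)}<r\}$: there the winner is the highest of the $n-k$ untreated bidders (a uniformly random subset, by exchangeability), all of whom have $\phi(v_i)\le 0$. The expected maximum of negative virtual values over a random subset shrinks as the subset shrinks, giving monotonicity in one line; the term $\E[\phi(v^{(1)})\vc{1}\{v^{(1)}\ge r\}]$ is independent of $k$. If you want to persist with a direct revenue comparison, you would have to prove the quantitative inequality
$\E\bigl[(\rho-b^{(2)})^{+}\,\vc{1}\{m \text{ wins},\, b_m\ge\rho\}\bigr]\le \E\bigl[\text{revenue lost by deleting } m \,\vc{1}\{b_m<\rho\}\bigr]$,
which in effect forces you to rederive the virtual-value identity; I would recommend switching to the paper's route.
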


\begin{proof}
 Since
  agents are iid we can sample agents according to the following procedure:
  draw $n$ iid samples from $F$ and take a random assignment from them to
  agents. This is equivalent to drawing the valuation $v_i$ for each
  agent and then choosing $k$ out of $n$ at random players to apply reserve
  prices. We will denote by $v^{(t)}$ the $t$-th largest bid.
  
  We consider two scenarios: (i) the highest bidder is above
  the reserve. In this case, she will be allocated the good. (ii) the
  highest bidder is below the reserve. In this case, the agent to whom we 
  allocate the good is the highest agent for whom no reserve price is
  applied. 
  
  Let 
  $\textsc{MaxR}_t(z_1, z_2, \hdots, z_n)$ be the expected maximum of a
  uniformly random subset of $t$ elements drawn from $\{z_1, z_2, \hdots, z_n\}$. Notice that for any
  vector $\vc{z}$, the function $\textsc{MaxR}_t(z_1, z_2, \hdots, z_n)$  is monotone
  non-decreasing in $t$. Second, observe that if $z_i$ are drawn iid, then:
  $$\E[ \max\{z_1, \hdots, z_t\}] = \textsc{MaxR}_t(z_1, z_2, \hdots, z_n).$$ 
  
  For $k<n$, by the Myerson Lemma:
  \begin{equation}\label{eq:revenue}\tag{$\star$}
    \begin{aligned}
    &\rev_E(k) = \E[\phi(v^{(1)}) \cdot \vc{1} \{ v^{(1)} \geq r \}] + \\ & \quad
      \E[\textsc{MaxR}_{n-k}(\phi(v_1), \hdots, \phi(v_n)) \vert v^{(1)} < r]
    \cdot \Pr(v^{(1)} < r)    \end{aligned}
  \end{equation}

  We finish by noticing that when $v_i < r$ then $\phi(v_i) < 0$. Therefore
  the second term above is negative and non-increasing in $k$ by the monotonicity
  of the $\textsc{MaxR}$ operator. For $k=n$, however, the second term
  disappears and we recover the optimal auction.
\end{proof}

We now show that the auction with lazy reserve prices does not suffer from this kind of paradoxical behavior. 

\begin{theorem}\label{thm:revenue-testing-lazy}
  Assume agents are iid with regular distribution $F$ and let $\rev_L(k)$ be the revenue obtained from lazily applying the Myerson reserve price to
  $k$ out of $n$ agents and applying no reserve to the remaining agents. Then
  $$\textstyle\rev_L(k) = \frac{k}{n} \rev_L(n) + \left(1-\frac{k}{n} \right) \rev_L(0)$$
\end{theorem}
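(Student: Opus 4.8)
The plan is to prove the linearity identity $\rev_L(k) = \frac{k}{n}\rev_L(n) + (1-\frac{k}{n})\rev_L(0)$ by decomposing the expected revenue according to which agent wins, exactly as in the setup of the eager proof. First I would use the same coupling trick: since the agents are iid, sample $n$ valuations $v_1 \geq v_2 \geq \dots$ (relabeled by rank) and then independently pick a uniformly random subset $T$ of size $k$ of the $n$ slots to receive the Myerson reserve $r$. In the lazy auction only the \emph{highest} bidder's reserve is ever consulted, so the outcome depends on the joint draw only through (a) the identity/rank of the highest bidder, (b) whether that bidder is in $T$, and (c) the top two values. The key observation is that in the lazy auction, conditioned on the event that the top bidder is \emph{not} in $T$ (probability $1-k/n$, independent of the valuations), the auction behaves exactly like the no-reserve second-price auction, contributing $\rev_L(0)$; and conditioned on the event that the top bidder \emph{is} in $T$ (probability $k/n$), the auction behaves exactly like the full-reserve auction, contributing $\rev_L(n)$.

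More carefully, I would write
\[
\rev_L(k) = \E\big[\mathbf{1}\{\text{winner} \in T\}\cdot \rev_L(\vc{b};\vc{r})\big] + \E\big[\mathbf{1}\{\text{winner} \notin T\}\cdot \rev_L(\vc{b};\vc{r})\big],
\]
where ``winner'' denotes the highest bidder (whose index is determined by the $v_i$'s alone, before $T$ is drawn). Because $T$ is a uniformly random $k$-subset independent of the valuations, the event that the (fixed) highest bidder lies in $T$ has probability exactly $k/n$ and is independent of everything else relevant to the revenue. On that event, the highest bidder is subject to reserve $r$ and the realized revenue equals $\max(r, v^{(2)})\cdot\mathbf{1}\{v^{(1)}\geq r\}$, which is precisely the per-sample revenue of the auction where \emph{all} agents have reserve $r$ (since only the top bidder's reserve matters in the lazy rule); taking expectations over the valuations gives $\rev_L(n)$. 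On the complementary event the highest bidder has no reserve, so the item is always sold at price $v^{(2)}$, matching the no-reserve auction and yielding $\rev_L(0)$ in expectation. Multiplying by the respective probabilities $k/n$ and $1-k/n$ gives the claim.

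The one subtlety to handle carefully — and the step I expect to be the main (minor) obstacle — is justifying that conditioning on ``highest bidder $\in T$'' does not distort the conditional distribution of the valuation vector: this requires noting that the rank structure of the $v_i$'s is drawn first and $T$ second, so the indicator $\mathbf{1}\{\text{argmax}_i v_i \in T\}$ is, after conditioning on the valuations, a Bernoulli$(k/n)$ variable independent of the valuations. Hence $\E[\mathbf{1}\{\text{winner}\in T\}\cdot \rev_L] = (k/n)\,\E[\rev_L \mid \text{winner}\in T\text{, all reserves } r] = (k/n)\rev_L(n)$, and symmetrically for the other term. A second small point is to confirm that when the top bidder is in $T$ the second-highest bidder's reserve is irrelevant (true by the lazy rule, which never consults it), so the conditional revenue genuinely matches $\rev_L(n)$ rather than some intermediate quantity; this is immediate from the definition of lazy reserves in Section~\ref{sec:lazyeagerauctions}. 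No monotonicity or regularity of $F$ is actually needed for this theorem — regularity is inherited from the theorem statement but the argument is purely combinatorial.
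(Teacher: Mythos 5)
Your proposal is correct, and it rests on exactly the same key observation as the paper's proof: under the coupling where the $n$ iid valuations are drawn first and the treated $k$-subset $T$ is drawn second, the lazy auction only ever consults the reserve of the highest bidder, and the event that this bidder is treated has probability $k/n$ independently of the valuation vector. The only difference is in the bookkeeping. The paper invokes Myerson's lemma to write $\rev_L(k) = \E[\phi(v^{(1)}) \cdot \vc{1}\{v^{(1)} \geq r^{(1)}\}]$ and then splits on whether $r^{(1)} = r$ or $r^{(1)} = 0$, whereas you work directly with the realized payments $\max(r, v^{(2)}) \cdot \vc{1}\{v^{(1)} \geq r\}$ and $v^{(2)}$. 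Your version is marginally more elementary: it needs neither virtual values nor regularity of $F$, as you correctly point out, while the paper's choice keeps this theorem in the same virtual-value language as the eager counterpart (Theorem \ref{thm:revenue-testing-eager}). Both arguments are complete; the subtleties you flag (independence of $\vc{1}\{\text{winner} \in T\}$ from the valuations, and irrelevance of the second-highest bidder's reserve under the lazy rule) are exactly the points that make the decomposition legitimate. As a side note, your direct computation also makes it easy to spot that the coefficient on the $\rev_L(n)$ term in the paper's final display is misprinted as $1 - \frac{k}{n}$ when it should be $\frac{k}{n}$.
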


\begin{proof}
  As in the previous theorem, we sample the agents by drawing  $n$ iid samples from $F$ and taking a random assignment from them to
  agents. 
  
  In the second price auction with lazy reserves, we will choose the
  agent with bid $v^{(1)}$ and declare her as the winner if $v^{(1)}$ is at
  least the reserve price, which is $r$ with probability $k/n$ and $0$ with the
  remaining probability.  By Myerson's lemma:

  $$\rev_L(k)= \E[\phi(v^{(1)}) \cdot \vc{1} \{ v^{(1)} \geq r^{(1)} \} ]$$
  where $r^{(t)}$ denotes the personalized reserve of agent with the $t$-th
  highest bid. Therefore:

  $$\begin{aligned}
  & \rev_L(k) \\ & =  {\textstyle \left(1-\frac{k}{n} \right)} \E[\phi(v^{(1)})] + 
  {\textstyle \left(1-\frac{k}{n} \right)} \E[\phi(v^{(1)}) \cdot \vc{1} \{ v^{(1)}
  \geq r \}]  \\ & = {\textstyle \left(1-\frac{k}{n} \right)} \rev_L(0) + {\textstyle
  \left(1-\frac{k}{n} \right)} \rev_L(n)
  \end{aligned}$$
\end{proof}

To better understand the detrimental effect of eager reserve prices, it's useful to write equation (\ref{eq:revenue}) in explicit form so that we can
evaluate the impact of applying reserves to a subset of buyers for particular
distributions:

\begin{lemma}
  In the setting of Theorem \ref{thm:revenue-testing-eager}:
  $$\begin{aligned}
  &\rev_E(k) = \int_r^1 \phi(x) \cdot n F(x)^{n-1} f(x)  dx\\ & \quad - \vc{1}\{k<n\}
  F^n(r) \int_0^r \left(\frac{F(x)}{F(r)}\right)^{n-k} \phi'(x) dx
  \end{aligned}$$
\end{lemma}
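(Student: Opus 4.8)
The plan is to take the formula \eqref{eq:revenue} for $\rev_E(k)$ obtained in the proof of Theorem~\ref{thm:revenue-testing-eager} and evaluate its two summands explicitly, using that the $n$ valuations are iid with CDF $F$ on $[0,1]$ and that the Myerson reserve obeys $r=\phi^{-1}(0)$, i.e.\ $\phi(r)=0$. For the first summand, $\E[\phi(v^{(1)})\cdot\vc{1}\{v^{(1)}\ge r\}]$, the maximum of $n$ iid draws from $F$ has density $nF(x)^{n-1}f(x)$, so this term is exactly $\int_r^1\phi(x)\,nF(x)^{n-1}f(x)\,dx$ (note it does not depend on $k$), which is the first term of the claim.

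For the second summand, $\E[\textsc{MaxR}_{n-k}(\phi(v_1),\dots,\phi(v_n))\mid v^{(1)}<r]\cdot\Pr(v^{(1)}<r)$, I would first dispose of $k=n$: there the summand is absent (matching $\vc{1}\{k<n\}$) because every agent carries the reserve and the item is unsold whenever $v^{(1)}<r$. For $k<n$, conditioned on $v^{(1)}<r$ the $n$ valuations are iid with the truncated CDF $F(x)/F(r)$ on $[0,r)$, and the winner is the highest-value agent among the $n-k$ reserve-free agents, which form a uniformly random subset; hence the winning value is the maximum of $n-k$ iid truncated draws, with CDF $(F(x)/F(r))^{n-k}$. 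Commuting the maximum through the monotone $\phi$ (and using the identity between $\textsc{MaxR}_t$ and the expected maximum of $t$ iid samples recorded in the proof of Theorem~\ref{thm:revenue-testing-eager}), the conditional expectation equals $\int_0^r\phi(x)\,d\bigl[(F(x)/F(r))^{n-k}\bigr]$, so the second summand becomes $F(r)^n\int_0^r\phi(x)\,d\bigl[(F(x)/F(r))^{n-k}\bigr]$ since $\Pr(v^{(1)}<r)=F(r)^n$.

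The remaining step is integration by parts: $\int_0^r\phi(x)\,d\bigl[(F(x)/F(r))^{n-k}\bigr]=\bigl[\phi(x)(F(x)/F(r))^{n-k}\bigr]_0^r-\int_0^r(F(x)/F(r))^{n-k}\phi'(x)\,dx$. The boundary term at $x=0$ vanishes since $F(0)=0$ and $n-k\ge1$, and the boundary term at $x=r$ vanishes since $\phi(r)=0$; multiplying the surviving integral by $F(r)^n$ and reinstating $\vc{1}\{k<n\}$ yields precisely $-\vc{1}\{k<n\}\,F(r)^n\int_0^r(F(x)/F(r))^{n-k}\phi'(x)\,dx$, which together with the first term is the asserted identity.

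The main obstacle I expect is the bookkeeping for the second summand: justifying that after conditioning on $v^{(1)}<r$ the draws remain iid with the truncated law and the reserve-free set is still uniform, so that the winning value really is the top order statistic of $n-k$ truncated draws; and then tracking that the two boundary terms disappear for different reasons, $F(0)=0$ on the left and the Myerson property $\phi(r)=0$ on the right. The rest is a routine change of variables and integration by parts.
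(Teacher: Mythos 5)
Your proposal is correct and follows essentially the same route as the paper: evaluate the two summands of the revenue decomposition $(\star)$, using the density $nF(x)^{n-1}f(x)$ of the top order statistic for the first term, and for the second term reduce (via the uniformly-random-subset/deferred-decisions observation) to the maximum of $n-k$ iid truncated draws, then integrate by parts with both boundary terms vanishing. Your writeup is if anything slightly more explicit than the paper's about why the boundary terms disappear ($F(0)=0$ at the left endpoint, $\phi(r)=0$ at the right), but the argument is the same.
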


\begin{proof}
  For the first term, notice that the distribution of $v^{(1)}$ is given by
  $n F(x)^{n-1} f(x)$ since $\Pr(v^{(1)}   \leq x) = \prod \Pr(v_i \leq x) = F(x)^n$.
  For the second term, since the agents are iid, the maximum of $n-k$ randomly chosen agents from 
  among $n$ buyers is identical to the maximum of $n-k$ buyers. By
  the principle of deferred decisions, we can first sample $n-k$ buyers and then
  draw their values. 
  
  Since in the second expression we condition their value to
  be at most $r$, we can simply compute the maximum over $n-k$ buyers with
  density $f(x) / F(r)$ for $0 \leq x \leq r$. By the same argument as before
  the density of the maximum is $(n-k) \cdot \frac{f(x)}{F(r)}\cdot \left(
  \frac{F(x)}{F(r)} \right)^{n-k-1}$. Using this fact, We can write the second
  term as:
  $$\begin{aligned}
    & F(r)^n \cdot \int_0^r \left[ (n-k) \cdot \frac{f(x)}{F(r)}\cdot \left(
    \frac{F(x)}{F(r)} \right)^{n-k-1} \right] \phi(x) dx \\ & \quad =
    \vc{1} \{k<n\} \cdot F^n(r) \left\{ \phi(x) \cdot
    \left(\frac{F(x)}{F(r)}\right)^{n-k} \bigg|_0^r  \right.
  \\ & \quad \quad \quad - \left. \int_0^r \left(\frac{F(x)}{F(r)}\right)^{n-k} \phi'(x) dx \right\} \\ & \quad
      = - \vc{1}\{k<n\} \cdot F^n(r) \int_0^r \left(\frac{F(x)}{F(r)}\right)^{n-k} \phi'(x) dx
  \end{aligned}$$
\end{proof}

\subsection{Case study: Uniform distribution}

To get some intuition about Theorem \ref{thm:revenue-testing-eager} we look at the
setting with $n$ iid bidders distributed according to the $[0,1]$-uniform
distribution, for which $\phi(x) = 2x-1$ and $r = 1/2$:
$$\begin{aligned}
&\rev_E(k) = \int_{1/2}^1 (2x-1) n x^{n-1} dx \\ & \quad - \vc{1}\{k<n\} \cdot \left(\frac{1}{2}\right)^{n}
\cdot \int_0^{1/2} (2x)^{n-k} \cdot 2 dx \\ & = \frac{n+2^{-n}-1}{n+1} -
\vc{1}\{k<n\} \cdot \frac{2^{-n}}{n-k+1}
\end{aligned}$$

In Figure \ref{fig:uniform} we plot the revenue values for $\rev_E(k)$ and $\rev_L(k)$ with $n=5$.  

\section{Experimental Results}\label{sec:empirical}

\begin{figure}
\label{fig:empirical}
\centering
\includegraphics[width=0.8\textwidth]{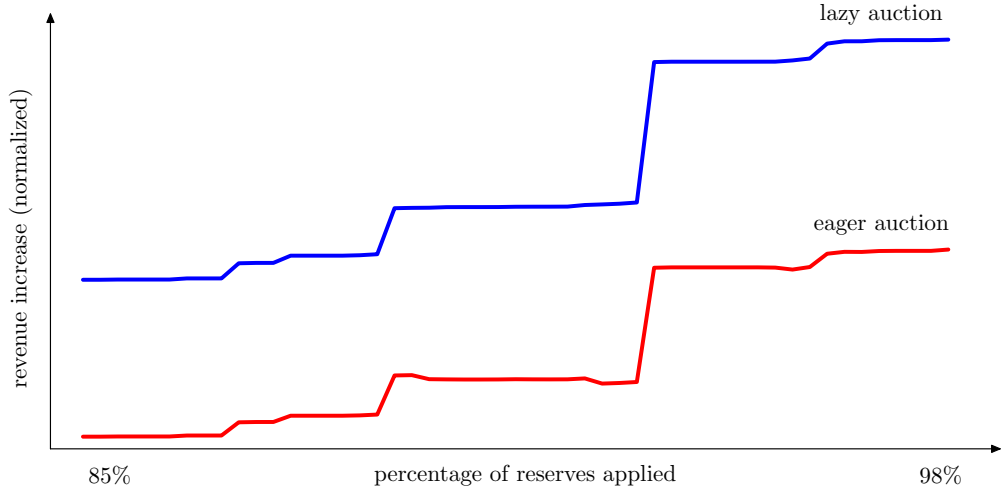}
\caption{Revenue gain obtained by applying the optimal lazy reserve prices to a percentage of the buyers in the auction. Note that the revenue of the lazy auction is monotone non-decreasing, while the eager auction doesn't have this
property.}
\end{figure}

In the previous sections we discussed versions of the second price auction with
personalized reserve prices: the eager (E) and lazy (L) auction, and proved
that while the two notions of reserve prices are incomparable, eager reserves dominate
lazy reserve, except for non-identical and non-independent settings, and in all of the settings total revenue of one 
is always within a factor of two of another. Further, we showed  that applying 
the eager reserve prices to a {\em subset} of the bidders is non monotonic,
and applying reserves to more bidders may (in theory) lead to lower revenues.  In this section we validate these findings by simulating the effect of reserve prices on real world data. 

{\bf Data. } We  collect bids sent to a large advertising exchange over the course of part of a day, 
and then restrict our attention to five ad slots with the highest traffic volume. Each of the ad slots 
has the bids submitted for hundreds of thousands of auctions. We report 
the results for each of the ad slots individually.

\subsection{Non-monotonicity}
Recall that in Section \ref{sec:ab} we considered the standard setting where each 
bidder $i$ draws an independent bid from bid distribution $F_i$. In the case of eager
reserve prices, applying reserves to only a subset of the bidders leads to a {\em decrease} 
in the overall revenue to the auctioneer. 

In this section we show that this scenario is not restricted to theory.  We consider a set of auctions
and compute the optimal lazy reserve price for each bidder. We then apply this reserve price in 
both lazy and eager fashions to a subset of the bidders and plot the overall revenue. The results 
are shown in Figure \ref{fig:empirical}. 

We confirm our theoretical findings: while the revenue in the auction with the lazy reserve pricing strategy 
is monotonically increasing in the number of bidders subject to the reserve, the same cannot be said 
for the greedy reserve pricing strategy.  The overall trend is positive, but it is non monotonic, and at times increasing the {\em number} of bidders
subject to the optimal reserve price decreases the overall revenue. 

\subsection{Revenue Gains}
Our second set of experiments addresses the question whether eager or lazy reserve prices 
lead to higher revenues in practice. For the best comparison, we simulate the auctions under 
both lazy and eager strategies while applying the optimal lazy reserve prices $\vc{r}_L^*$ (see Proposition \ref{prop:linear_lazy})
and the monopoly reserve prices $\vc{r}_M$ (see the discussion following the
proof of Proposition \ref{prop:nphard}).

We compute the revenue lifts due to personalized reserves under four strategies: 
$$\Delta_L(\vc{r}^*_L) = \E[\rev_L(\vc{b}; \vc{r}^*_L)] - \E[\rev(\vc{b};
\vc{0})]$$ 
$$\Delta_E(\vc{r}^*_L) = \E[\rev_E(\vc{b}; \vc{r}^*_L)] - \E[\rev(\vc{b};
\vc{0})]$$ 
$$\Delta_L(\vc{r}_M) = \E[\rev_L(\vc{b}; \vc{r}_M)] - \E[\rev(\vc{b};
\vc{0})]$$ 
$$\Delta_E(\vc{r}_M) = \E[\rev_E(\vc{b}; \vc{r}_M)] - \E[\rev(\vc{b};
\vc{0})]$$ 
where the expectation $\E[\cdot]$ denotes the average over all historical
queries.

We normalize the lifts  by setting the revenue of the lazy reserve auction with
optimally set reserve prices to $1$ ($\Delta_L(\vc{r}^*_L) = 1$
), and report multiplicative improvement over this setting. 

\begin{table}
\centering
\caption{Revenue Lift Comparison. The units are normalized so that
$\Delta_L( \vc{r}^*_L) = 1$}
\label{table:revlift}
\vspace{0.1in}
\begin{tabular}{|c|c|c|} \hline
$\Delta_E( \vc{r}^*_L)$ &
$\Delta_L( \vc{r}_M)$ &
$\Delta_E( \vc{r}_M)$ \\ \hline
1.13204 & 0.892116 & 1.18965 \\
1.24867 & 0.958164 & 1.28977 \\
1.16233 & 0.942408 & 1.09623 \\
1.19286 & 0.886347 & 1.11872 \\
1.14805 & 0.942208 & 1.08097\\
\hline\end{tabular}
\end{table}

Table \ref{table:revlift} shows that the eager auction outperforms the lazy
auction in practice both when the optimal lazy reserves are used and when
monopoly reserves are used. We also observe that there is no clear winner
between using optimal lazy reserves or monopoly reserves in the eager auction.
For $3$ out of $5$ slots, the vector of optimal lazy reserves outperforms the
monopoly reserves. In practice one may want to start with any of those two
vectors and perform local updates to improve the performance of the eager auction.

Another axis along which it makes sense to compare the eager and lazy auctions
is the welfare loss of the allocation, i.e., how much welfare is lost due to the
application of reserve prices. If $\textsc{W}^L(\vc{b}; \vc{r})$ and
$\textsc{W}^E(\vc{b}; \vc{r})$ are respectively the welfare of the lazy and eager auctions
when reserve price $\vc{r}$ are applied, we define the quantities:
$$\tilde{\Delta}_L(\vc{r}^*_L) =  \E[\textsc{W}(\vc{b}; \vc{0})] -
\E[\textsc{W}_L(\vc{b}; \vc{r}^*_L)] $$ 
$$\tilde{\Delta}_E(\vc{r}^*_L) = \E[\textsc{W}(\vc{b}; \vc{0})] -
\E[\textsc{W}_E(\vc{b}; \vc{r}^*_L)] $$ 
$$\tilde{\Delta}_L(\vc{r}_M) = \E[\textsc{W}(\vc{b}; \vc{0})] -
\E[\textsc{W}_L(\vc{b}; \vc{r}_M)]$$ 
$$\tilde{\Delta}_E(\vc{r}_M) = \E[\textsc{W}(\vc{b}; \vc{0})] -
\E[\textsc{W}_E(\vc{b}; \vc{r}_M)]$$ 
which are the analogues for social welfare of the quantities described in Table
\ref{table:revlift}.

\begin{table}
\centering
\caption{Welfare Loss Comparison. The units are normalized so that
$\tilde{\Delta}_L( \vc{r}^*_L) = 1$}
\label{table:welflift}
\vspace{0.1in}
\begin{tabular}{|c|c|c|} \hline
$\tilde{\Delta}_E( \vc{r}^*_L)$ &
$\tilde{\Delta}_L( \vc{r}_M)$ &
$\tilde{\Delta}_E( \vc{r}_M)$ \\ \hline
0.689856 & 1.34808 & 0.896388 \\
0.70264 & 1.16671 & 0.814005 \\
0.602512 & 1.046 & 0.695923 \\
0.616777 & 1.0458 & 0.647172 \\
0.636252 & 0.995548 & 0.671452 \\
\hline\end{tabular}
\end{table}

As expected, Table \ref{table:welflift} shows that the welfare loss in the eager
auction is larger than the welfare loss in the lazy auction. Similarly to what
occurs for revenue, there is no clear winner between the optimal lazy reserves
and the monopoly reserves for the eager auction.

\section{Conclusion}

The results in this work follow two major themes. The first lies in devising methods for testing new strategies, be they reserve prices, bidding agents, or mechanisms, in competitive environments~\cite{ABTesting}. As we saw in Section \ref{sec:ab} these can have counterintuitive effects, and robust experimental design and analysis is required for making correct decisions. 

The second is in foregoing the assumption that bidders' value distributions are known and investigating the computational complexity of optimizing from previously observed data. Whether designing new mechanisms in these environments~\cite{ColeR}, understanding the amount of past data needed~\cite{MorgensternR}, or, providing approximation algorithms, as we do in this work, this is a rich and exciting open area.  

\bibliographystyle{alpha}
\bibliography{biblio}

\end{document}